\def\eqref#1{equation~\ref{#1}}
\def\1{\bm{1}}
\DeclareMathAlphabet{\mathsfit}{\encodingdefault}{\sfdefault}{m}{sl}
\SetMathAlphabet{\mathsfit}{bold}{\encodingdefault}{\sfdefault}{bx}{n}
\DeclareMathOperator*{\argmin}{arg\,min}
\theoremstyle{plain}
\newtheorem{theorem}{Theorem}[section]
\theoremstyle{definition}
\newtheorem{assumption}[theorem]{Assumption}
\theoremstyle{remark}
\algnewcommand\algorithmicinput{\textbf{Input:}}
\algnewcommand\INPUT{\item[\algorithmicinput]}
\algnewcommand\algorithmicoutput{\textbf{Output:}}
\algnewcommand\OUTPUT{\item[\algorithmicoutput]}
\algnewcommand\algorithmicbegin{\textbf{begin}}
\algnewcommand\BEGIN{\item[\algorithmicbegin]}
\algnewcommand\algorithmicendbegin{\textbf{end}}
\algnewcommand\ENDBEGIN{\item[\algorithmicendbegin]}
\newcommand{\ma}[1]{{\color{black}{#1}}}
\newcommand{\m}[1]{{\color{black}{#1}}}
\newcommand{\TheName}{FedDHAD}
\newcommand{\TheHetName}{FedDH}
\newcommand{\TheDropoutName}{FedAD}
\newcommand{\TheHetNameE}{FedDHE}
\begin{document}

\title{Efficient Federated Learning with Heterogeneous Data and Adaptive Dropout}

\author{Ji Liu}
\authornote{Corresponding author: jiliuwork@gmail.com}
\authornote{Both authors contributed equally to this research.}
\affiliation{%
  \institution{Baidu Research and Hithink RoyalFlush Information Network Co., Ltd.}
  \streetaddress{}
  \city{Hangzhou}
  \country{China}
}

\author{Beichen Ma}
\authornotemark[2]
\affiliation{%
  \institution{Baidu Research, China and Cornell University}
  \streetaddress{}
  \city{New York}
  \country{United States}
}

\author{Qiaolin Yu}
\affiliation{%
  \institution{Cornell University}
  \streetaddress{}
  \city{New York}
  \country{United States}
}

\author{Ruoming Jin}
\affiliation{%
  \institution{Kent State University}
  \streetaddress{}
  \city{Kent, Ohio}
  \country{United States}
}

\author{Jingbo Zhou}
\affiliation{%
  \institution{BIL, Baidu Research}
  \streetaddress{}
  \city{Beijing}
  \country{China}
}

\author{Yang Zhou}
\affiliation{%
  \institution{Department of Computer Science and Software Engineering, Auburn University}
  \streetaddress{}
  \city{Auburn}
  \country{United States}
}

\author{Huaiyu Dai}
\affiliation{%
  \institution{Department of Electrical and Computer Engineering, North Carolina State University}
  \streetaddress{}
  \city{North Carolina}
  \country{United States}
}

\author{Haixun Wang}
\affiliation{%
  \institution{Instacart}
  \streetaddress{}
  \city{San Francisco, California}
  \country{United States}
}

\author{Dejing Dou}
\affiliation{%
  \institution{BEDI Cloud and School of Computer Science, Fudan University}
  \streetaddress{}
  \city{Beijing}
  \country{China}
}

\author{Patrick Valduriez}
\affiliation{%
  \institution{Inria, University of Montpellier, CNRS, LIRMM,  France and LNCC}
  \streetaddress{}
  \city{Petropolis, Rio de Janeiro}
  \country{Brazil}
}

\begin{abstract}
\label{sec:abs}
Federated Learning (FL) is a promising distributed machine learning approach that enables collaborative training of a global model using multiple edge devices.
The data distributed among the edge devices is highly heterogeneous. Thus, FL faces the challenge of data distribution and heterogeneity, where non-Independent and Identically Distributed (non-IID) data across edge devices may yield in significant accuracy drop. 
Furthermore, the limited computation and communication capabilities of edge devices increase the likelihood of stragglers, thus leading to slow model convergence. In this paper, we propose the \TheName{} FL framework, which comes with two novel methods: Dynamic Heterogeneous model aggregation (\TheHetName{}) and Adaptive Dropout (\TheDropoutName{}). \TheHetName{} dynamically adjusts the weights of each local model within the model aggregation process based on the non-IID degree of heterogeneous data to deal with the statistical data heterogeneity.
\TheDropoutName{} performs neuron-adaptive operations in response to heterogeneous devices to improve accuracy while achieving superb efficiency. The combination of these two methods makes \TheName{} significantly outperform state-of-the-art solutions in terms of accuracy (up to 6.7\% higher), efficiency (up to 2.02 times faster), and computation cost (up to 15.0\% smaller). 
\end{abstract}

\keywords{Federated learning, heterogeneous data, adaptive dropout, distributed machine learning, device heterogeneity}

\received{21 June 2024}
\received[revised]{21 June 2024}
\received[accepted]{21 June 2024}

\maketitle

\section{Introduction}
\label{sec:intro}
Ever growing, high numbers of edge devices (devices for short) generate huge amounts of distributed data that can be very useful for machine learning.
To avoid the common problems of a centralized approach (high data transfer cost, long training time, ...),
Federated Learning (FL) \cite{mcmahan2021advances,chen2024trustworthy,liu2024enhancing} collaboratively trains a global model using multiple edge devices.
While the data generally contain sensitive information about end-users, e.g., facial images, location-based services, and account information \cite{liu2022distributed}, 
multiple legal restrictions \cite{GDPR,CCPA} protect the privacy and security of distributed data, so that data aggregation from distributed devices is almost impossible. 
Within a typical FL architecture \cite{mcmahan2017communication}, a powerful parameter server module (server for short) \cite{liu2023heterps,li2014scaling} 
coordinates the training process of multiple devices
and aggregates the trained models instead of the raw data from the devices, thus providing security and privacy.  

As multiple devices may belong to various end-users, the data that is distributed among the edge devices is highly heterogeneous \cite{Wang2020Tackling,liu2022multi,liu2022Efficient} in terms of distribution, \textit{i.e.}, non-Independent and Identically Distributed (non-IID) \cite{mcmahan2017communication}. While data format or semantics are generally homogeneous, the statistical data distribution among devices is highly heterogeneous \cite{LiStudy2022}. For instance, the proportion of the samples corresponding to Class A on a device can be very different from that of other devices. While FL was proposed to train a global model using non-IID data on mobile devices \cite{mcmahan2017communication,Li2020On}, the traditional FL models may incur inferior accuracy \cite{Li2020FedProx} 
brought by such statistical data heterogeneity. The training process of FL generally corresponds to multiple rounds, each with three steps. In Step 1, the server sends the global model to selected devices. In Step 2, the selected devices update the global model with local data and upload the updated model to the server. In Step 3, the server performs the model aggregation with the uploaded models. Conventional model aggregation methods, \textit{e.g.}, FedAvg \cite{mcmahan2017communication}, ignore the property of non-IID data while aggregating the local models, which would lower accuracy. 

Recent techniques have improved the performance of FL on non-IID data using additional proximal terms (Step 2) in the local update, \textit{e.g.}, FedProx \cite{Li2020FedProx}, or based on gradient normalization and averaged steps in the model aggregation (Step 3), \textit{e.g.}, FedNova \cite{Wang2020Tackling}. 
However,
the trained model still suffers from data heterogeneity. The performance of the trained model may be unacceptably poor, and the training process may even become unstable in practice. We quantify the heterogeneity between the distribution of the dataset on a device and that of all the datasets on all available devices by non-IID degree using Formula \ref{eq:non-iid-def} (see details in Section \ref{subsec:theoreticalAna}).
Although the Jensen–Shannon (JS) divergence~\cite{fuglede2004jensen} or Kullback-Leibler (KL) divergence \cite{kullback1997information} can be exploited, 
there is still a difference between JS and KL divergences and the non-IID degree. 

To alleviate the statistical data heterogeneity issue, a promising solution is
to adapt for each model the weight tuning to the non-IID degree.
The global model optimization with weight tuning can be formulated as a bi-level optimization problem \cite{Xie2021Multi},
where the global model optimization and the weight adjustment belong to the optimization of two levels. However, there are two major difficulties in finding the optimal weights of each model, which is critical for the training process of FL. First, the training process dynamically varies and the optimal weights may vary in various epochs. Second, it is difficult to accurately calculate the non-IID degrees, which are critical for the optimal weights.

While the server is generally powerful, some devices typically have modest computation and communication capacities \cite{mcmahan2017communication}. For instance, in a real phone keyboard application \cite{Bonawitz2019Towards},
6-10\% devices may take an unaccepted long time to return the updated models.
Such device heterogeneity leads to slow model convergence due to low efficiency of the local training process and unbalanced workload. High-dimensional model parameters (\textit{e.g.}, deep neural networks) in the FL paradigm give rise to extremely high computation and communication overhead. Within a standard FL environment, the bandwidth between the devices and server is generally of low quality. Furthermore, models become bigger and bigger, especially for Large Language Models (LLM), to achieve superb performance, which incurs significant computation and communication costs. Thus, it takes much time to send the global model from the server to the devices or to upload the updated models from each device to the server.
In addition, heterogeneous hardware resources across devices increase the likelihood of stragglers, i.e., the slowest participating devices, which incur an unbalanced workload.

Dropout techniques \cite{Srivastava2014Dropout} randomly remove neurons during training to reduce overfitting and to achieve superior performance.
They have been exploited in FL for reducing the size of the global model and thus communication and computation cost \cite{horvath2021fjord}.
However, the straightforward application of dropout to FL may incur significant accuracy degradation of the global model \cite{Bouacida2021FedDropout} with improper dropout rates \cite{wen2022federated}. In order to fully exploit the heterogeneous devices, adaptive pruning rates should be utilized. 
Furthermore, model aggregation with dropout becomes even more challenging than without dropout as it is hard to verify the importance of the updated models with some neurons removed. Thus, it is critical to dynamically adjust the weights with the adaptive pruning operation.

Statistical data heterogeneity and device heterogeneity are often regarded as orthogonal problems, which are addressed separately. However, solving either problem leads to inaccurate and inefficient analytical results.
Some methods can be exploited to address the non-IID data issue, \textit{e.g.}, regularization-based method \cite{Li2020FedProx,acar2021federated}, gradient normalization \cite{Wang2020Tackling}, contrastive learning \cite{li2021model}, client drift adjustment \cite{Karimireddy2020SCAFFOLD}, 
and personalization \cite{Sun2021PartialFed}.
Some other methods have been addressed the device heterogeneity problem, \textit{e.g.}, asynchronous staleness aggregation \cite{park2021sageflow,jia2024efficient,liu2024aedfl,jia2025efficient,liu2024fedasmu}. 
However, \m{some of them} can be combined with our approach to achieve superior performance. 


\ma{In this paper, we propose the \TheName{} framework to simultaneously address the data heterogeneity and the heterogeneous limited capacity problem. We propose exploiting the non-IID degrees of devices to dynamically adjust the weights of each local model within the model aggregation process, \textit{i.e.}, \TheHetName{}, in order to alleviate the impact of the heterogeneous data. 
Then, we introduce a neuron-adaptive model dropout method, \textit{i.e.}, \TheDropoutName{}, to perform the dropout operation adapted to each device while achieving load balancing so as to reduce the communication and computation cost and to achieve a good balance between efficiency and effectiveness. While \TheDropoutName{} can compensate the slight overhead brought by \TheHetName{}, we integrate these two methods  into \TheName{}, thus accelerating the training process and improving accuracy.
We conduct extensive experiments to compare \TheName{} with representative approaches with two typical models over two real-world datasets. Experimental results demonstrate the major advantages of \TheName{} in terms of accuracy, efficiency, and computation cost. The major contributions are summarized as follows:
\begin{itemize}
    \item We propose a Dynamic Heterogeneous model aggregation algorithm for non-IID data, \textit{i.e.}, \TheHetName{}. \TheHetName{} takes advantage of the non-IID degree of devices and dynamically adjusts the weights of local models within the model aggregation process.
    \item We propose an Adaptive Dropout method, \textit{i.e.}, \TheDropoutName{}. \TheDropoutName{} carries out dropout operations adapted to each neuron and diverse idiosyncrasies, \textit{i.e.}, heterogeneous communication and computation capacity, of each device to improve the efficiency of the training process.
    \item We conduct extensive experiments to compare \TheName{} with representative approaches with two typical models over two real-world datasets. Experimental results demonstrate the superb advantages of \TheName{} in terms of accuracy, efficiency, and computation cost.
\end{itemize}}

The rest of this paper is organized as follows. Section \ref{sec:preliminaries} provides some background on FL. \ma{Section \ref{sec:relatedwork} describes related work.} Section \ref{sec:sysModel} introduces the system model of \TheName{}. In Section \ref{sec:method}, we describe the \TheName{} framework, including the \TheHetName{} and \TheDropoutName{} methods. Section \ref{sec:experiments}, gives our experimental evaluation. Finally, Section \ref{sec:conclusion} concludes.








\section{Background on Federal Learning}
\label{sec:preliminaries}

In this section, we introduce some background concepts on FL. For simplicity, we use the notations summarized in Table \ref{tab:summary}. 

\textbf{Taxonomy.} FL can be classified into cross-silo and cross-device, in terms of participants \cite{mcmahan2021advances}. The participants of cross-device FL are generally end-users with edge devices, e.g., mobiles. The number of participants is about 100 or bigger. The end-user is the person who generates the raw data for FL. The participants of cross-silo FL generally consist of multiple organizations and the number of participants is generally less than 100. In addition, FL can be further divided according to the kind of data partitioning, i.e., horizontal, vertical, and hybrid \cite{yang2019federated}. When the samples from each participant have the same dimensions with the same semantics but correspond to various end-users, the setting is horizontal FL. When the samples of each participant are of different dimensions (various semantics) but correspond to the same end-users, the setting is vertical FL. Otherwise, when the samples of each participants are of different dimensions and correspond to various end-users, the setting is hybrid FL. In this paper, we focus on horizontal cross-device FL. \ma{Recent work has explored one-shot semi-supervised FL using pre-trained diffusion models for category-based classification \cite{yang2024exploring}. This approach leverages the power of diffusion models to achieve learning with limited labeled data in a federated setting.} In addition, we conduct multiple rounds of training in our system, which deviates from one-shot federated learning \cite{SalehkaleybarOneShot}. 

\textbf{Objectives of FL.} In a horizontal cross-device FL environment composed of a powerful server and $N$ devices, we assume that the raw data is distributed at each device and cannot be transferred while there is no raw data on the server for training. We assume that Dataset $D_k = \{x_{k}, y_{k}\}^{n_k}$ is located at Device $k$, consisting of $n_k$ samples. $x_{k}$ represents the input data sample on Device $k$, and $y_{k}$ is the corresponding label. We denote the number of samples on all the devices by $n$. Then, the objective of the FL training process can be formulated as:
\vspace{-2mm}
\begin{equation}
\vspace{-2mm}
\min_{w}\left[F(w)\triangleq\frac{1}{n}\sum_{k = 1}^N n_k F_k(w)\right],
\label{eq:problem}
\end{equation}
where $w$ is the parameters of the global model, $F_k(w)$ is the local loss function of Device $k$ with $f(w,x_{k},y_{k})$ capturing the error of the model on the sample $\{x_{k},y_{k}\}$, as defined in Formula \ref{eq:fk}.
\begin{equation}
\vspace{-2mm}
F_k(w)\triangleq\frac{1}{n_k}\sum_{\{x_{k},y_{k}\} \in \mathcal{D}_k} f(w,x_{k},y_{k})
\label{eq:fk}
\end{equation}

\begin{table}[t]
\caption{Summary of main notations.}
\begin{center}
\begin{tabular}{cc}
\toprule
Notation & Definition \\
\hline
\label{tab:summary}
$\mathcal{K}$ & Set of all devices \\
$t$ & The number of global round \\
$S_t$ & The set of selected devices \\
$N$ & The size of $\mathcal{K}$ \\
$\mathcal{D}_k$ & Local dataset on Device $k$\\
$n_k$ & The size of $\mathcal{D}_k$ \\
$\mathcal{D}$; $n$ & Global dataset; size of $D$ \\
$d_k$ & Batch size of the local update of Device $k$ \\
$F$& Global loss function \\
$F_k$ & Local loss function on Device $k$\\
$w^*$ & The optimal model \\
$w^k_{t,h}$ & Local model on Device $k$, Local update $h$ \\
$\bar{w}_{t,h}$ & Global model in the $h$-th local update \\
$\eta_{t,h}$ & Learning rate in the $h$-th local update \\
$\bar{g}_{t,h}$ & Global gradients in the $h$-th local update \\
$\tau$ & Number of local epochs\\
$\zeta^{k}_{t,h}$ & The sampled dataset at local iteration $h$ on Device $k$\\
$|\zeta^{k}_{t,h}|$ & The size of $\zeta^{k}_{t,h}$\\
$p_{k,t}$ & The weights in the model aggregation for Device $k$ \\ 
\bottomrule
\vspace{-6mm}
\end{tabular}
\end{center}
\end{table}

\textbf{Data heterogeneity.} The data is generally highly heterogeneous as the distribution can be both non-identical and non-independent, i.e., non-IID. Let us denote the possibility to get a sample from sampling the local data of Device $i$ by $\mathcal{P}_i$. Then, the distribution is non-identical when $\mathcal{P}_i \neq \mathcal{P}_j$, where $i \neq j$. The data of each device is non-independent as the end-users may have dependencies on geographic relationships \cite{eichner2019semi}. For instance, the data on the end-users of the same region may have similar patterns. Furthermore, the number of samples on each device can be significantly different.

\textbf{Challenges.} As a \textit{de facto} standard FL approach, FedAvg \cite{mcmahan2017communication} is a simple yet effective classic approach for FL.
With FedAvg, the training process of FL contains three steps in each round as presented in Section \ref{sec:intro}. In Step 3, FedAvg calculates a simple average model with the updated local models. While the data is highly heterogeneous, the data that severely deviate the global distribution may have important impact on the global model with FedAvg, which may significantly degrade the accuracy of the aggregated model \cite{Li2020FedProx}. In addition, the devices are generally of heterogeneous computational and communication capacity. FedAvg transfers the whole model between devices and the server, which incurs significant computational and communication costs and unbalanced workload. Some devices may take an unaccepted time to upload the updated local model, which is inefficient. Thus, device heterogeneity increases the total training time in FL. In this paper, we propose a novel framework, i.e., \TheName{}, to address these two challenges.

\section{Related Work}
\label{sec:relatedwork}

In this section, we introduce \m{the existing FL methods dealing with the model performance issue incurred by data heterogeneity. Then, we} present related methods to handle the training efficiency problem.

\begin{table}[!t]
\caption{Server computation overhead}
\label{tab:server_burden}
\begin{center}
\begin{tabular}{l|ll|ll}
\hline
        & \multicolumn{2}{c|}{CIFAR10}       & \multicolumn{2}{c}{CIFAR100}       \\ \cline{2-5} 
        & \multicolumn{1}{l|}{LeNet} & CNN   & \multicolumn{1}{l|}{LeNet} & CNN   \\ \hline
FedDHAD & \multicolumn{1}{l|}{0.122} & 0.156 & \multicolumn{1}{l|}{0.126} & 0.145 \\ \hline
FedDH   & \multicolumn{1}{l|}{0.086} & 0.042 & \multicolumn{1}{l|}{0.104} & 0.043 \\ \hline
\end{tabular}
\vspace{-7mm}
\end{center}
\end{table}

Conventional methods, e.g., FedAvg \cite{mcmahan2017communication}, can converge with non-IID data \cite{Li2020On}, while they incur modest accuracy due to inconsistent objectives \cite{Wang2020Tackling} or client drift \cite{Karimireddy2020SCAFFOLD}. Several methods have been proposed to alleviate the problem. FedProx \cite{Li2020FedProx} utilizes regularization in the local objective to tolerate the difference between the local model and the global model. \m{FedCurv \cite{casella2023benchmarking} calculate global parameter stiffness to control discrepancies.} FedDyn \cite{acar2021federated} further introduces dynamic regularization for each device. 
FedNova \cite{Wang2020Tackling} exploits gradient normalization to reduce the influence brought by the heterogeneous learning steps with non-IID data. MOON \cite{li2021model} exploits contrastive learning to adjust the local training based on the similarity of model representations. \m{The primary limitation of these methods is that they concentrate on performance within a single domain under label skew conditions, neglecting the issue of domain shift, resulting in unsatisfactory performance across multiple domains. FedHEAL \cite{chen2024fair} reduces parameter update conflicts by discarding unimportant parameters and implements fair aggregation to prevent domain bias. However, it requires careful hyperparameter tuning and assumes homogeneous network architectures across all clients, limiting its applicability in heterogeneous settings.} 
Momentum-based methods 
deal with non-IID data using momentum at the server side 
or at the device side \cite{JinAccelerated2022}.
Various methods, \textit{e.g.}, partial personalized model aggregation in PartialFed \cite{Sun2021PartialFed}, 
meta-learning-based method \cite{Khodak2019Adaptive}, 
and multi-task learning \cite{Smith2017Federated}, are proposed to realize personalization for non-IID data. \m{ADCOL \cite{li2023adversarial} and FCCL \cite{huang2023generalizable} emphasize personalized model training instead of optimizing a single shared global model. However, these approaches necessitate additional components such as discriminators and access to public datasets, imposing significant computational and resource burdens on both participating devices and the central server.} Several other methods \cite{Yurochkin2019Bayesian}
align the models within the model aggregation step in order to avoid structural misalignment incurred by the permutation of neural network parameters. 
In addition, some other works \cite{Lin2020Ensemble}
exploit the publicly available data in the server while handling the non-IID, which may incur security or privacy issues.  \m{Recent advancements in large pretrained models have led to efforts to incorporate them into FL. FedKTL \cite{zhang2024upload} introduced a knowledge-distillation-based method, that transfer distilled prototypes or representations to clients, addressing the challenges in knowledge transfer caused by data and model heterogeneity. In addition, AugFL \cite{yue2025augfl} devised an inexact-ADMM-based algorithm for pretrained model personalized FL, allowing knowledge transfer from a private pretrained model at the server to clients while reducing computational cost. However, sharing pretrained model parameters or representations poses risks to the ownership of the these parameters \cite{kang2023grounding}. Furthermore, transferring additional parameters or representations to clients may increase their computational and storage requirements.}
While bi-level optimization is utilized to train federated graph models with linear transformation layers and graph convolutional networks in BiG-Fed \cite{Xing2021BiG-Fed}, this approach does not consider the non-IID degrees of the distributed data to future improve the accuracy of FL.

Finally, multiple methods address the heterogeneous computation and communication capacity of devices in FL. As an efficient method to reduce overfitting and achieve excellent performance, dropout \cite{Srivastava2014Dropout} is exploited to reduce the communication and computation costs. \m{Federated Dropout was proposed by extending this approach, which reduces communication costs by deriving small sub-models from the global model for local updates and exchanging only these sub-models between the servre and clients.} However, dropout may incur accuracy degradation with an improper dropout rate \cite{wen2022federated}. While empirical experience may help set the dropout rate,
the diversity of models and datasets brings difficulties in FL. Furthermore, a simple assessment of the dropout strategy based on the variation of loss values \cite{Bouacida2021FedDropout} does not consider the nature loss reduction within the FL training process and may correspond to low efficiency or low accuracy. \m{Adaptive Federated Dropout maintains an activation score map to generate suitable sub-models for each client, enabling participation from clients with limited capabilities. However, it falls short in providing customized pruned sub-models tailored to different clients' specific needs.} An ordered dropout mechanism is introduced to achieve ordered and nested representations of multiple models \cite{horvath2021fjord}, which exploits a dropout rate space. However, the dropout rate space is still difficult to construct. 
Pruning 
methods are exploited to reduce the communication and computation costs \cite{jiang2022model,bibikar2022federated}, which incur lossy operations and may degrade the accuracy. 


\section{System Model \& Problem Formulation}
\label{sec:sysModel}

In this section, we present our system model and formally formulate the problem we address. 

\textbf{System model.} During the training process of \TheName{}, the global model is updated with multiple rounds. As shown in Figure \ref{fig:framework}, each round is composed of five steps. Within each round, the server randomly selects $N' = N * C$ devices, where $C$ represents the ratio between the number of the selected devices and $N$. In Step \textcircled{1}, dropout operations are carried out and $N'$ sub networks (subnet) are generated \ma{(see details in Section \ref{subsec:fedAD})}. The dropout rates of each device are updated when certain conditions are met (see details in Section \ref{subsec:fedAD}). Then, the sub neural networks are downloaded to the selected devices \ma{through receiving the current global model parameters from the server} in Step \textcircled{2}. Afterward, 
\ma{each selected devices then performs local computation based on the local dataset (see details in Section \ref{subsec:fedDH})} in Step \textcircled{3}, and the updated model is uploaded to the server in Step \textcircled{4}. Finally, the updated models are aggregated with the consideration of the data heterogeneity in Step \textcircled{5}. While the model aggregation with the consideration of the data heterogeneity (see details in Section \ref{subsec:fedDH}) introduces slight computation overheads, we exploit adaptive dropout (see details in Section \ref{subsec:fedAD}) to accelerate the training process. 

\begin{figure}[!t]
\centering
\includegraphics[width=0.8\linewidth]{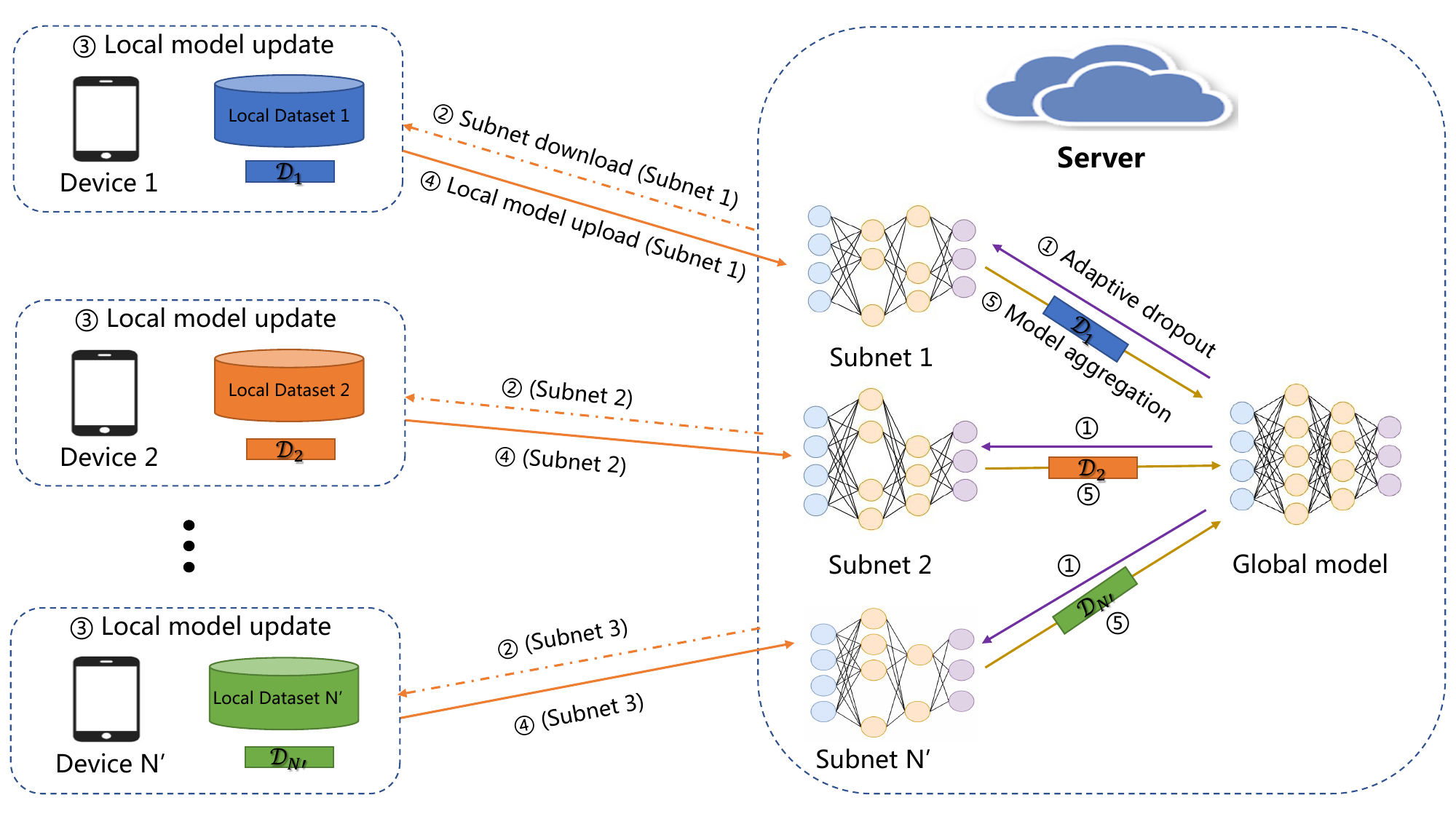}
\vspace{-4mm}
\caption{The system model of \TheName{}.}
\label{fig:framework}
\vspace{-6mm}
\end{figure}

\textbf{Problem formulation.} The problem to address in this paper is how to efficiently train a global model of high performance with a FL setting. We break down the problem into two aspects, i.e., model performance and training efficiency. Thus, the first problem is how to achieve high model performance (defined in Formula \ref{eq:problem}) while considering the data heterogeneity. As data heterogeneity is critical to the model performance, the model performance problem consists of how to estimate non-IID degrees and how to leverage the non-IID degree to dynamically adjust the importance (weight) of each model. We formulate the training efficiency problem as how to minimize the training time of each round with heterogeneous devices while ensuring the performance of the trained model.

\section{Efficient Heterogeneous FL with \TheName{}}
\label{sec:method}


In this section, we present the framework of \TheName{}.
We first
propose the dynamic heterogeneous model aggregation method, \textit{i.e.}, \TheHetName{}. Then, we present the adaptive dropout method, \textit{i.e.}, \TheDropoutName{}, to reduce the communication and computation costs in order to accelerate the training process while achieving excellent accuracy with heterogeneous devices. Then, we present the integration of the dynamic heterogeneous model aggregation method, \textit{i.e.}, \TheHetName{} and the adaptive dropout method, \textit{i.e.}, \TheDropoutName{}.

\subsection{Heterogeneous Model Aggregation}
\label{subsec:fedDH}

In this section, we propose our dynamic heterogeneous federated learning method, \textit{i.e.}, \TheHetName{}, which exploits the non-IID degrees of the heterogeneous data on each device to adjust the weights of multiple models in order to address the model performance problem. We first introduce a dynamic method to estimate the non-IID degree. Then, we present the details of \TheHetName{}. Afterward, we provide the theoretical convergence analysis of \TheHetName{}. 

In order to address the model performance problem, we reformulate Formula \ref{eq:problem} as a bi-level optimization problem \cite{Xie2021Multi} as follows:
\vspace{-2mm}
\begin{align}
\label{eq:bilevel}
\min_{w,Q}\left[F(w, Q)\triangleq\frac{1}{n}\sum_{k = 1}^N n_k F_k(w(Q))\right]&, 
w(Q) = \sum_{k = 1, q_k \in Q}^{N'} q_k w_k, 
\end{align}
\vspace{-4mm}

\begin{align}
\textit{s.t.} ~w_k = \argmin_{w_k}F_k(w_k)&~\forall~k \in \{1, ..., N\}, \nonumber
\end{align}
\vspace{-6mm}
\begin{align}
Q = \argmin_{Q}F(w, Q), &\sum_{q_k \in Q}^{N'}  q_k = 1, \nonumber
\end{align}
where $w$ represent the parameters for the global model, $Q$ represents the set of weights for each model. The global model is a combination of local models based on the weights $w$. $q_k$ is the weight for Model $w_k$ uploaded from Device $k$. $F$ is the loss function, which is defined in Section \ref{sec:preliminaries}. $F_k$ corresponds to the local loss function defined in Section \ref{sec:preliminaries}. We perform the outer level optimization of $Q$ to minimize $F$ by dynamically updating weights using the estimated non-IID degrees. We conduct the inner optimization of local models to minimize the local loss function. 

\subsubsection{Non-IID Degree Estimation}

While it is complicated to estimate the non-IID degree of the heterogeneous data, we propose a dynamic method to estimate the non-IID degree.
A higher non-IID degree indicates a more significant difference in the data distribution between a device and that of all the overall dataset. While the Jensen–Shannon (JS) divergence~\cite{fuglede2004jensen} can be exploited to represent the non-IID degrees, there is still a slight difference between the JS divergence and the real non-IID degree of each device. Thus, we introduce extra control parameters, \textit{i.e.}, $\upsilon_k \in \Upsilon$ and $b_k \in B$, where $\Upsilon$ and $B$ represent the set of control parameters $\upsilon_k$ and $b_k$, with the JS divergence for the non-IID degree representation as defined in Formula \ref{eq:non-iid-degree}.
\vspace{-2mm}
\begin{equation}
\vspace{-2mm}
D^{non-IID}_k(P_k, \upsilon_k, b_k) = \upsilon_k \mathcal{D}_{JS}(P_k) + b_k,
\label{eq:non-iid-degree}
\end{equation}
where $D^{non-IID}_k(P_k, \upsilon_k, b_k)$ represents the non-IID degree of Device $k$. We exploit a linear transformation of JS divergence in Formula \ref{eq:non-iid-degree} while the parameters $v_k$ and $b_k$ are dynamically updated within \TheHetName{} based on Formula \ref{eq:pUpdate} to minimize the global loss function. \ma{We conduct experiments to empirically prove the linear relationship between JS divergence and non-IID degree. See details in Apendix.} Based on Theorem \ref{eq:eqTheorem2}, a proper representation of non-IID can lead to a better model. $\mathcal{D}_{JS}(P_k)$ represents the JS divergence of Device $k$ as defined in Formula \ref{eq:js-divergence}. 
\vspace{-2mm}
\begin{equation}
\vspace{-2mm}
\mathcal{D}_{JS}(P_k) = \frac{1}{2}\mathcal{D}_{KL}(P_k||\overline{P}) + \frac{1}{2}\mathcal{D}_{KL}(P_m||\overline{P}),
\label{eq:js-divergence}
\end{equation}
where $\overline{P} = \frac{1}{2}(P_k + P_m)$, $P_m =\{P(y)|y \in \mathcal{Y}\} = \frac{\sum_{k = 1}^{N}n_k P_k}{\sum_{k = 1}^{N}n_k}$, $P_k = \{P(y)|y \in \mathcal{Y}_k\}$ with $\mathcal{Y}$ representing the label set of the total dataset, $\mathcal{Y}_k$ representing the label set of Dataset $D_k$, $P(y) = \{\theta_1, \theta_2, ..., \theta_C\}$ representing the distribution of the corresponding dataset with $\theta_i$ referring to the proportion of data samples for Class $i$ in Dataset $D_k$, and $\mathcal{D}_{KL}(\cdot||\cdot)$ is the Kullback-Leibler (KL) divergence~\cite{kullback1997information} as defined in Formula \ref{eq:kl-divergence}: 
\vspace{-2mm}
\begin{equation}
\vspace{-2mm}
\mathcal{D}_{KL}(P_i||P_j)= \sum_{y \in \mathcal{Y}}P_i(y)\log(\frac{P_i(y)}{P_j(y)}).
\label{eq:kl-divergence}
\end{equation}

\begin{figure}[t]
\vspace{-4mm}
\begin{algorithm}[H]
\caption{Dynamic Heterogeneous Federated Learning (\TheHetName{})}
\label{alg:heterAggregation}
\begin{algorithmic}[1]
\INPUT  \quad \newline
$T$: The maximum number of rounds \newline
$N$: The number of devices \newline
$P = \{P_1, P_2, ..., P_N\}$: The set of the distribution of the data on each device \newline
$\lambda = \{\lambda_1, \lambda_2, ..., \lambda_T\}$: The learning rates of control parameter update \newline
$H = \{\eta_1, \eta_2, ..., \eta_T\}$: The learning rates of model update
\OUTPUT \quad \newline
$w^{t}$: The global model at Round $t$ 
\For{$k$ in $\{1, 2, ..., N\}$ (in parallel)} \label{line:initialBegin}
\State Calculate $\mathcal{D}_{JS}(P_k)$ according to Formula \ref{eq:js-divergence}; $\upsilon_k^0 \gets 1$; $b_k^0 \gets 0$ \label{line:init}
\EndFor \label{line:initialEnd}
\For{$t$ in $\{1, 2, ..., T\}$} \label{line:serverUpdateBegin}
\State Sample clients $\mathcal{C} \subseteq \{1, 2, ..., N\}$ \label{line:deviceSampling}
\For{$k$ in $\mathcal{C}$}
\State Update $w^t_k$ according to Formula \ref{eq:localUpdate} on each device
\EndFor \label{line:localUpdateEnd}
\State Update $w^t$ according to Formula \ref{eq:bilevel} \label{line:wUpdate}
\State Update $\Upsilon^t$ and $B^t$ according to Formula \ref{eq:pUpdate} \label{line:pUpdate}
\State Update $Q$ according to Formula \ref{eq:qUpdate} \label{line:qUpdate}
\EndFor \label{line:serverUpdateEnd}
\end{algorithmic}
\end{algorithm}
\vspace{-10mm}
\end{figure}

\ma{$P_k$ values are statistical metadata of the raw data.} The statistical meta information is similar (in terms of privacy issue) to the number of samples in each device, which is transferred to the server in standard FL, e.g., FedAvg. \ma{As the statistical meta information reveals negligible privacy concerns and reveals negligible privacy concerns \cite{tan2022federated,lai2021oort,duan2019astraea}},
we assume that $P_k$ can be directly transferred from each device to the server in this paper.

When the statistical meta information cannot be directly transferred due to privacy issues, we can exploit the gradients from each device to estimate $P_k$, i.e., \TheHetNameE{}, which leads to similar performance as that of \TheHetName{}. Please note this is a complementary method when high security or privacy requirement is needed.
In this case, we assume that there is a small dataset with a balanced class distribution on the server, which can be obtained from public data \cite{yang2021client}. \ma{While sensitive data are not allowed to be transferred to the server, some insensitive data may reside in the servers or the cloud \cite{yang2021client}, which can be exploited as the small datasets on the server \cite{Zhang2022FedDUAP,liu2024efficient}. }
The size of the balanced dataset is the same for each class. Please note that the balanced dataset cannot be exploited to train a satisfactory global model due to limited information. For each updated model from Device $k$, we can calculate the loss $L_i(w_k)$ with respect to Class $i$, and its corresponding gradients $\nabla L_i(w_k)$, based on the balanced dataset. We denote the set of gradients for $C$ classes as:
\vspace{-2mm}
\begin{equation}
\vspace{-2mm}
    \nabla L(w_k) = \{\nabla L_1(w_k), \nabla L_2(w_k), ..., \nabla L_C(w_k)\}.
\end{equation}
Inspired by \cite{yang2022improved}, we estimate the proportion of data samples for Class $i$ on Device $k$ as:
\vspace{-3mm}
\begin{equation}
\vspace{-2mm}
\label{eq:gradientNormalize}
    \theta_k(i) = \frac{e^{\frac{\beta}{\parallel \nabla L_i(w_k) \parallel^2}}}{\sum_{j\in C} e^{\frac{\beta}{\parallel \nabla L_j(w_k) \parallel^2}}}
\end{equation}
where $\beta$ is a hyper-parameter to adjust the proportion normalization between each class. Then, we can obtain the estimated $P_k$. This process is performed once at the beginning of the training process.

\subsubsection{Dynamic Heterogeneous Model Aggregation}

During the training process of \TheHetName{}, the models and the control parameters are updated iteratively. The local model is updated on each device using the Stochastic Gradient Descent (SGD) approach as defined in Formula \ref{eq:localUpdate} \cite{zinkevich2010parallelized},
where $t$ is the index of the round, $\nabla_{w^t_k} F_k(w^t_k)$ refers to the partial derivative of the model with respect to $w_t$ on Device $k$, $\eta^t$ represents the learning rate of the local training process on Device $k$. The model on each device can be a sub network generated from the global model (see details in Section \ref{subsec:fedAD}).
\vspace{-2mm}
\begin{equation}
\vspace{-2mm}
\label{eq:localUpdate}
    w^{t+1}_k \gets w^t_k - \eta^t \nabla_{w^t_k} F_k(w^t_k).
\end{equation}
The control parameters are updated on the server using Formula \ref{eq:pUpdate}, where $\lambda_\Upsilon^t$ and $\lambda_B^t$ represent the learning rates for the update of $\Upsilon^t$ and $B^t$, respectively, $w^t(Q^t)$ is defined in Formula \ref{eq:bilevel}. $\nabla_{\Upsilon^t} F(w^t(Q^t))$ and $\nabla_{b^t} F(w^t(Q^t))$ refer to the partial derivative of the global model with respect to control parameters $\Upsilon^t$ and $b^t$, respectively, which can be calculated based on the sub network of each device.
\vspace{-2mm}
\begin{equation}
\vspace{-2mm}
\begin{split}
\label{eq:pUpdate}
    \Upsilon^t &\gets \Upsilon^t - \lambda_\Upsilon^t \nabla_{\Upsilon^t} F(w^t(Q^t)), \\
    B^t &\gets B^t - \lambda_B^t \nabla_{B^t} F(w^t(Q^t)), 
\end{split}
\end{equation}
Then, the set of weights $Q^t$ is updated based on Formula \ref{eq:qUpdate}, where  $D^{non-IID}_k(P_k, \upsilon^t_k, b^t_k)$ is defined in Formula \ref{eq:non-iid-degree}. Formula \ref{eq:qUpdate} can incur higher accuracy with smaller loss compared with traditional aggregation methods (see theoretical analysis in Section \ref{subsec:theoreticalAna}). During the training process, we calculate $\nabla_{\Upsilon^t} F(w^t(Q^t))$ and $\nabla_{B^t} F(w^t(Q^t))$ only based on the selected devices ($S^t$) instead of all the devices.
\vspace{-2mm}
\begin{equation}
\vspace{-2mm}
\begin{split}
\label{eq:qUpdate}
    q^t_k = \frac{\frac{n_k}{D^{non-IID}_k(P_k, \upsilon^t_k, b^t_k)}}{\sum_{k \in S^t}{\frac{n_k}{D^{non-IID}_k(P_k, \upsilon^t_k, b^t_k)}}}, \\
    q^t_k \in Q^t,~\forall~k \in \{1, ..., N\}.
\end{split}
\end{equation}

\TheHetName{} is shown in Algorithm \ref{alg:heterAggregation}. First, the JS divergence is calculated and the control parameters are initialized (Lines \ref{line:initialBegin} - \ref{line:initialEnd}). Then, the models are updated on each selected devices (Lines \ref{line:deviceSampling} - \ref{line:localUpdateEnd}). Afterward, the global model (Line \ref{line:wUpdate}) the control parameters (Lines \ref{line:pUpdate} - \ref{line:qUpdate}) are updated. 
Please refer to \cite{ghadimi2018approximation,hong2020two} 
for the convergence proof of the bi-level optimization of control parameters and the models. 

\subsubsection{Theoretical Analysis}
\label{subsec:theoreticalAna}

In this section, we present the theoretical convergence proof of \TheHetName{}.
We first introduce the assumptions and then present the convergence theorems with an upper bound. 
\begin{assumption} 
\label{assump:lip}
\textit{Lipschitz Gradient: The function $F_k$ is $L$-smooth for each device $k \in \mathcal{N}$ i.e., $\parallel \nabla F_k(x) - \nabla F_k(y) \parallel  \leq  L \parallel x - y \parallel$.}
\end{assumption}
\begin{assumption} 
\label{assump:convex}
\textit{$\mu$-strongly convex: The function $F_k$ is $\mu$-strongly convex for each device $k \in \mathcal{N}$ i.e., $ <\nabla F_k(x) - \nabla F_k(y), x - y>  \geq  \mu \parallel x - y \parallel ^ 2$.}
\end{assumption}
\begin{assumption} 
\label{assump:unbiase}
\textit{Unbiased stochastic gradient: $\mathbb{E}_{\zeta^{k}_{t,h} \sim \mathcal{D}_i} [\nabla f_k(w; \zeta^{k}_{t,h})] = \nabla F_k(w) $.}
\end{assumption}
\begin{assumption} 
\label{assump:localVariance}
\textit{Bounded local variance: For each device $k \in \mathcal{N}$, the variance of its stochastic gradient is bounded: $\mathbb{E}_{\zeta^{k}_{t,h} \sim \mathcal{D}_i} \parallel \nabla f_k(w, \zeta^{k}_{t,h}) - \nabla F_k(w) \parallel^2 \le{\sigma^2}$.}
\end{assumption}
\begin{assumption} 
\label{assump:sequared}
\textit{Bounded local gradient: For each device $k \in \mathcal{N}$, the expected squared of stochastic gradient is bounded: $\mathbb{E}_{\zeta^{k}_{t,h} \sim \mathcal{D}_i}\parallel \nabla f_k(w, \zeta^{k}_{t,h}) \parallel^2 \le{G^2}$.}
\end{assumption}

When Assumptions \ref{assump:lip} - \ref{assump:sequared} hold, we get the following theorem.

\begin{theorem}
\label{eq:eqTheorem1}
Suppose that Assumptions \ref{assump:lip} to \ref{assump:sequared} hold and that the devices are with unbiased sampling. When the learning rate meets $\eta_{t,h} \le $min$\{\frac{1}{\mu}, \frac{1}{4L}\}$ for any $t$ and $h$, then for all $R \ge 1$ we have:
\vspace{-2mm}
\begin{equation*}
    E\parallel F(w_{T,H}) - F({w}^*) \parallel \leq \frac{L}{2}\frac{\upsilon}{TH + \gamma},
\end{equation*}
where $\gamma > 0$, $\upsilon = $max$\{(\gamma + 1)\parallel \bar{w}_{1,1} - {w}^* \parallel, \frac{\beta^{m2}B}{\beta\mu - 1}\}$, $\beta > \frac{1}{\mu}$, $B = 6L \Gamma \notag + 2Q^2 (H-1)^2 \frac{G^2}{b} + \frac{\sigma^2}{b} \notag$, $\Gamma = \max_{t\in\{1,...,T\}}\sum_{k \in S_t} p_{k,t} \Gamma_{k,t}$.
\end{theorem}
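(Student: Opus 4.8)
The plan is to follow the virtual-sequence analysis of federated averaging on non-IID data (in the spirit of Li et al.~\cite{Li2020On}), adapted to the weighted aggregation with the data-dependent weights $p_{k,t}$ from Formula~\ref{eq:qUpdate}; throughout this theorem the weights $\{p_{k,t}\}$ are treated as given, the convergence of their bi-level update being handled separately~\cite{ghadimi2018approximation,hong2020two}. Introduce the virtual averaged iterate $\bar{w}_{t,h} = \sum_{k \in S_t} p_{k,t}\, w^k_{t,h}$, which equals the true global model at the communication steps and is otherwise only a bookkeeping device, together with $\bar{g}_{t,h} = \sum_{k\in S_t} p_{k,t}\nabla f_k(w^k_{t,h};\zeta^k_{t,h})$ and its conditional mean $g_{t,h} = \sum_{k\in S_t} p_{k,t}\nabla F_k(w^k_{t,h})$. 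The goal is the one-step recursion
\begin{equation*}
\mathbb{E}\|\bar{w}_{t,h+1} - w^*\|^2 \le (1 - \eta_{t,h}\mu)\,\mathbb{E}\|\bar{w}_{t,h} - w^*\|^2 + \eta_{t,h}^2\, B ,
\end{equation*}
with $B = 6L\Gamma + 2Q^2(H-1)^2 G^2/b + \sigma^2/b$ exactly as in the statement, after which unrolling over the $TH$ local steps with a diminishing step size yields the claimed rate.

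\textbf{Auxiliary bounds.} To obtain that recursion I would establish three lemmas. (i) \emph{Bounded local drift:} between consecutive communications each local model has taken at most $H-1$ stochastic steps, so by Assumption~\ref{assump:sequared} (and the mini-batch size $b$) $\sum_{k\in S_t} p_{k,t}\,\mathbb{E}\|\bar{w}_{t,h} - w^k_{t,h}\|^2$ is of order $\eta_{t,h}^2(H-1)^2 G^2/b$, the factor $Q^2$ in $B$ absorbing the constant carried by the aggregation weights. (ii) \emph{Sampling variance:} because devices are sampled without bias, $\mathbb{E}_{S_t}[\bar{w}_{t,h}]$ is the full weighted average, and combining Assumptions~\ref{assump:unbiase} and~\ref{assump:localVariance} the extra contribution of sampling and stochastic gradients is the $\sigma^2/b$ piece. (iii) \emph{One virtual SGD step:} expand $\|\bar{w}_{t,h} - \eta_{t,h}\bar{g}_{t,h} - w^*\|^2$, take expectations, use Assumption~\ref{assump:unbiase} to replace $\bar{g}_{t,h}$ by $g_{t,h}$ in the cross term, and then invoke $\mu$-strong convexity (Assumption~\ref{assump:convex}) and $L$-smoothness (Assumption~\ref{assump:lip}) to bound $-2\eta_{t,h}\langle g_{t,h},\bar{w}_{t,h}-w^*\rangle$; the heterogeneity quantity $\Gamma = \max_{t}\sum_{k\in S_t} p_{k,t}\Gamma_{k,t}$, with $\Gamma_{k,t} = F_k(w^*) - \min_w F_k(w) \ge 0$ the local-versus-global optimality gap, enters here, and the restriction $\eta_{t,h}\le\min\{\frac{1}{\mu},\frac{1}{4L}\}$ is precisely what forces the coefficient of $\|\bar{w}_{t,h}-w^*\|^2$ down to $1-\eta_{t,h}\mu$ while collecting the residue into $6L\Gamma$.

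\textbf{Solving the recursion and passing to function value.} With the recursion established, set $\eta_{t,h} = \beta/(s_{t,h} + \gamma)$, where $s_{t,h} = (t-1)H + h$ is the global step index, $\beta > \frac{1}{\mu}$, and $\gamma>0$ is large enough that $\eta_{1,1}\le\min\{\frac{1}{\mu},\frac{1}{4L}\}$. Prove by induction on $s_{t,h}$ that $\mathbb{E}\|\bar{w}_{t,h}-w^*\|^2 \le \upsilon/(s_{t,h}+\gamma)$ with $\upsilon = \max\{(\gamma+1)\|\bar{w}_{1,1}-w^*\|^2,\ \beta^2 B/(\beta\mu-1)\}$: the base case is the definition of $\upsilon$, and the inductive step reduces to the elementary inequality $\left(1-\tfrac{\mu\beta}{s+\gamma}\right)\tfrac{\upsilon}{s+\gamma} + \tfrac{\beta^2 B}{(s+\gamma)^2} \le \tfrac{\upsilon}{s+1+\gamma}$, which holds for exactly this $\upsilon$. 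Taking $t=T$, $h=H$ gives $\mathbb{E}\|\bar{w}_{T,H}-w^*\|^2 \le \upsilon/(TH+\gamma)$, and $L$-smoothness in the form $F(w)-F(w^*)\le\frac{L}{2}\|w-w^*\|^2$ (valid since $w^*$ is the global minimizer) then yields $\mathbb{E}\|F(\bar{w}_{T,H})-F(w^*)\| \le \frac{L}{2}\cdot\frac{\upsilon}{TH+\gamma}$, the stated bound.

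\textbf{Main obstacle.} I expect the crux to be step (iii) in the presence of weights $p_{k,t}$ that depend on the data (through the JS divergence and the learned $\upsilon_k, b_k$) and vary across rounds, rather than being the fixed proportions $n_k/n$ of the classical analysis. The standard manipulation that turns $\sum_k p_{k,t}\bigl(F_k(w^*)-F_k^*\bigr)$ into a single constant relies on $\sum_k p_{k,t}=1$ and on the weights being independent of the iterates; here $\sum_{k\in S_t} p_{k,t}=1$ still holds by Formula~\ref{eq:qUpdate}, so the clean route is to treat $\{p_{k,t}\}$ as an arbitrary sequence of weights on the simplex and take the worst case over rounds --- which is exactly the role of the $\max_t$ in the definition of $\Gamma$ --- and then check that the step-size condition still dominates every term now carrying an extra weight factor, including the sampling-variance bound (ii). Making this uniform-over-weights argument precise is the technical heart; the remaining unrolling is routine.
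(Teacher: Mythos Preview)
Your proposal is correct and follows essentially the same route as the paper: virtual averaged iterate, the three ingredients (local drift, stochastic/sampling variance, one-step descent with the heterogeneity gap $\Gamma_{k,t}$), the resulting recursion $\mathbb{E}\|\bar{w}_{t,h+1}-w^*\|^2\le(1-\mu\eta_{t,h})\mathbb{E}\|\bar{w}_{t,h}-w^*\|^2+\eta_{t,h}^2 B$, the diminishing schedule $\eta_{t,h}=\beta/((t-1)H+h+\gamma)$ with the same induction, and the final $L$-smoothness step. One minor correction: in the paper the constant $Q$ is not an aggregation-weight artifact but the within-round learning-rate ratio (the assumption $\eta_{t,0}\le Q\,\eta_{t,h'}$ used when bounding the drift over the $H-1$ local steps); your ``main obstacle'' concern about data-dependent $p_{k,t}$ is resolved exactly as you surmise---they are treated as an arbitrary simplex sequence and absorbed via the $\max_t$ in $\Gamma$.
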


\begin{proof}
    See details in Appendix.
\end{proof}

From Theorem \ref{eq:eqTheorem1}, we can see that \TheHetName{} can converge when Assumptions \ref{assump:lip} to \ref{assump:sequared} hold. In addition, as \TheHetName{} dynamically updates the control parameters to adjust the weight of each updated model, the upper bound of the distance between the global model and the optimal model is smaller compared with other baseline methods without dynamic update based on Theorem \ref{eq:eqTheorem2}. As a result, the training efficiency of \TheHetName{} can be higher compared with the baseline methods.
\begin{theorem}
\label{eq:eqTheorem2}
Suppose that Assumptions \ref{assump:lip} to \ref{assump:sequared} hold and that the devices are with unbiased sampling. When the conditions in Theorem \ref{eq:eqTheorem1} are met and the weights $p_{k,t} = \frac{\frac{n_k}{\Gamma_{k,t}}}{\sum_{k'\in S_t}\frac{n_k'}{\Gamma_{k',t}}}$, the upper bound of $E\parallel F(w_{T,H}) - F({w}^*) \parallel$ is equal or smaller than that when $p_{k,t} = \frac{n_k}{\sum_{k'\in S_t}n_k'} $, with $(\gamma + 1)\parallel \bar{w}_{1,1} - {w}^* \parallel \leq \frac{\beta^{m2}B}{\beta\mu - 1}$.
\end{theorem}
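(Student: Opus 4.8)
The plan is to read off the explicit upper bound from Theorem \ref{eq:eqTheorem1} and reduce the comparison between the two weightings to a single application of the Cauchy--Schwarz inequality, applied round by round. Recall that Theorem \ref{eq:eqTheorem1} gives $E\|F(w_{T,H}) - F(w^*)\| \le \frac{L}{2}\frac{\upsilon}{TH+\gamma}$ with $\upsilon = \max\{(\gamma+1)\|\bar w_{1,1}-w^*\|,\ \frac{\beta^{m2}B}{\beta\mu-1}\}$ and $B = 6L\Gamma + 2Q^2(H-1)^2\frac{G^2}{b} + \frac{\sigma^2}{b}$, where $\Gamma = \max_{t\in\{1,\dots,T\}}\sum_{k\in S_t} p_{k,t}\Gamma_{k,t}$. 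The only term in this bound that the aggregation weights control is the heterogeneity quantity $\Gamma$ (the remaining constants $L,\mu,\beta,Q,G,\sigma,b,H,\gamma,T$ are unaffected by the choice between the two weightings considered, and $\Gamma_{k,t}$ is a per-device, per-round quantity independent of the $p_{k,t}$, exactly as in the proof of Theorem \ref{eq:eqTheorem1}). Since $B$ is increasing in $\Gamma$ and, under the hypothesis $(\gamma+1)\|\bar w_{1,1}-w^*\| \le \frac{\beta^{m2}B}{\beta\mu-1}$, the maximum defining $\upsilon$ is attained at $\frac{\beta^{m2}B}{\beta\mu-1}$, the whole bound is monotonically nondecreasing in $\Gamma$. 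Hence it suffices to prove that for every round $t$ the weighted heterogeneity $\sum_{k\in S_t} p_{k,t}\Gamma_{k,t}$ is no larger for the FedDH choice $p_{k,t} = \frac{n_k/\Gamma_{k,t}}{\sum_{k'\in S_t} n_{k'}/\Gamma_{k',t}}$ than for the FedAvg choice $p_{k,t} = \frac{n_k}{\sum_{k'\in S_t} n_{k'}}$; taking the maximum over $t$ then preserves the inequality.

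Next I would substitute each weighting into $\sum_{k\in S_t} p_{k,t}\Gamma_{k,t}$. For the FedDH weights the factor $\Gamma_{k,t}$ cancels the $1/\Gamma_{k,t}$ in the numerator of $p_{k,t}$, giving $\frac{\sum_{k\in S_t} n_k}{\sum_{k\in S_t} n_k/\Gamma_{k,t}}$; for the FedAvg weights one directly gets $\frac{\sum_{k\in S_t} n_k\Gamma_{k,t}}{\sum_{k\in S_t} n_k}$. The required inequality $\frac{\sum_{k\in S_t} n_k}{\sum_{k\in S_t} n_k/\Gamma_{k,t}} \le \frac{\sum_{k\in S_t} n_k\Gamma_{k,t}}{\sum_{k\in S_t} n_k}$ is, after clearing denominators, precisely $\bigl(\sum_{k\in S_t} n_k\bigr)^2 \le \bigl(\sum_{k\in S_t} n_k/\Gamma_{k,t}\bigr)\bigl(\sum_{k\in S_t} n_k\Gamma_{k,t}\bigr)$, which is Cauchy--Schwarz with $a_k = \sqrt{n_k/\Gamma_{k,t}}$, $b_k = \sqrt{n_k\Gamma_{k,t}}$ (so that $a_kb_k = n_k$). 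Equality holds exactly when $\Gamma_{k,t}$ is constant over $k\in S_t$, i.e., in the homogeneous case, which is why the statement says ``equal or smaller.'' Combining over $t$ yields $\Gamma^{\mathrm{FedDH}} \le \Gamma^{\mathrm{FedAvg}}$, hence $B^{\mathrm{FedDH}} \le B^{\mathrm{FedAvg}}$, hence the Theorem \ref{eq:eqTheorem1} bound for FedDH is at most that for FedAvg.

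The step I expect to require the most care is the self-referential appearance of $B$ inside the hypothesis: because $B$ itself depends on the weights, one must be explicit about which $B$ the inequality $(\gamma+1)\|\bar w_{1,1}-w^*\| \le \frac{\beta^{m2}B}{\beta\mu-1}$ refers to. Reading it for the smaller constant $B^{\mathrm{FedDH}}$ is the cleanest choice: it then immediately holds for $B^{\mathrm{FedAvg}}$ as well (since $B^{\mathrm{FedDH}}\le B^{\mathrm{FedAvg}}$), so both bounds genuinely sit in the branch where $\upsilon = \frac{\beta^{m2}B}{\beta\mu-1}$, and the comparison is driven purely by $B$, hence by $\Gamma$, hence by the Cauchy--Schwarz step. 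A secondary point worth stating carefully is that $\Gamma_{k,t}$, and all constants other than $\Gamma$ appearing in $B$, do not change between the two weightings, so that the monotonicity reduction in the first paragraph is legitimate. Everything else is routine algebra.
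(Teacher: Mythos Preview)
Your proposal is correct and follows the same overall reduction as the paper: both identify that under the stated hypothesis the bound reduces to $\upsilon = \frac{\beta^{m2}B}{\beta\mu-1}$, observe that $B$ is monotone in $\Gamma = \max_t \sum_{k\in S_t} p_{k,t}\Gamma_{k,t}$, and then compare the per-round weighted sums for the two choices of $p_{k,t}$.

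The only difference is in how the key inequality $\Gamma^*_t \le \Gamma_t$ is established. The paper writes out the difference $\Gamma^*_t - \Gamma_t$ over a common denominator, expands the numerator as a double sum $\delta = \sum_{k,k'} n_k n_{k'}(\Gamma_{k',t}-\Gamma_{k,t})/\Gamma_{k',t}$, and then pairs the $(k,k')$ and $(k',k)$ terms to obtain $-\frac{n_k n_{k'}}{\Gamma_{k,t}\Gamma_{k',t}}(\Gamma_{k,t}-\Gamma_{k',t})^2 \le 0$. You instead recognize the cleared-denominator inequality $(\sum_k n_k)^2 \le (\sum_k n_k/\Gamma_{k,t})(\sum_k n_k\Gamma_{k,t})$ as a one-line Cauchy--Schwarz with $a_k = \sqrt{n_k/\Gamma_{k,t}}$, $b_k = \sqrt{n_k\Gamma_{k,t}}$. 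These are the same inequality---the paper's symmetrization is precisely the standard sum-of-squares proof of Cauchy--Schwarz specialized to this case---so your packaging is a bit cleaner but not a different argument. Your explicit treatment of which $B$ the hypothesis refers to (taking it for $B^{\mathrm{FedDH}}$, which then forces it for $B^{\mathrm{FedAvg}}$) is a point the paper leaves implicit.
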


\begin{proof}
    See details in Appendix.
\end{proof}

\begin{figure}[t!]
\vspace{-4mm}
\begin{algorithm}[H]
\caption{Federated Adaptive Dropout (\TheDropoutName{})}
\label{alg:adaptive_dropout}
\begin{algorithmic}[1]
\INPUT \quad \newline
$t$: The current round \newline
$w^*$: The current global model at Round $t$ \newline
$FL$: The list of fully connected layers in $w^*$ \newline
$CL$: The list of convolutional layers in $w^*$ \newline
$S^t$: The set of selected devices at Round $t$ \newline
$\mathcal{K}$: The set of all devices \newline
$SD^t$: The set of selected devices for the pruning rates
\OUTPUT \quad \newline
$w'^t = \{w'^t_k| k \in S^t\}$: The sub models at Round $t$
\If{should update dropout rates} \label{line:updateBegin} 
    \For{$k \in SD^t$  (in parallel)}  \label{line:localDRateBegin} 
        \State $R^t_k = \{r^{t,i}_{k,l}|l\in CL \cup FL\} \gets$ calculate the rank (weight) of each Filter (Neuron) $i$ for each Layer $l$ \label{line:rankCalculate} 
        \State Calculate the dropout rate $d_{k}$ using $w^*$ \label{line:localDRate}
    \EndFor \label{line:localDRateEnd}
    \State Estimate $ET_{max}$ using pruning rates  \label{line:estTime}
    \State Calculate aggregated $R^t$ \label{line:ranks}
    \For{$k \in \mathcal{K}$  (in parallel)}  \label{line:DRateUpdateBegin} 
        \State $d^*_{k} \gets$ calculate the proper dropout rate using $ET_{max}$ \label{line:getDRate}
        \For{$node^i \in CL$ or $FL$} \label{line:RateUpdateBegin} 
            \State $d^{i}_{k} = \frac{e^{-r^{t,i}_{k}}}{\sum_{j\in CL~\text{or}~ FL}{e^{-r^{t,j}_{k}}}} N_{CL~\text{or}~FL} d^*_k$ \label{line:nodeRateUpdate}
        \EndFor \label{line:RateUpdateEnd} 
    \EndFor \label{line:DRateUpdateEnd}
\EndIf \label{line:updateEnd} 
\For{$k \in S^t$  (in parallel)}  \label{line:dropoutBegin} 
    \For{$node^i \in CL \cup FL$}
        \State $m^{i,t}_{k} = \left\{
        \begin{aligned}
\frac{1}{1-d^{i}_{k}} & ,~w.p.~(1-d^{i}_{k}), \\
0 & ,~w.p.~d^{i}_{k}
\end{aligned}
\right.$ \label{line:dropout} 
    \EndFor
    \State $w'^t \gets$ generate the sub model using the mask variables \label{line:dropoutOP} 
\EndFor \label{line:dropoutEnd}
\end{algorithmic}
\end{algorithm}
\vspace{-10mm}
\end{figure}

\subsection{Federated Adaptive Dropout}
\label{subsec:fedAD}

In this section, we propose a neuron-adaptive dropout method, \textit{i.e.}, \TheDropoutName{}, to address the training efficiency problem. \TheDropoutName{} dynamically generates adaptive sub models, which are adapted to the various characteristics and importance of neurons, for each device while achieving excellent accuracy with multiple heterogeneous devices. 
\TheDropoutName{} simultaneously reduces the computation costs on devices and the communication costs between devices and the server. 
We first present the details of \TheDropoutName{}. Then, we explain the model aggregation in \TheName{}, which combines \TheHetName{} and \TheDropoutName{}.

\subsubsection{Adaptive Dropout}

\ma{As shown in Algorithm \ref{alg:adaptive_dropout}, \TheDropoutName{} is composed of two parts, \textit{i.e.}, dropout rate update (Lines \ref{line:updateBegin} - \ref{line:updateEnd}) and model dropout operation (Lines \ref{line:dropoutBegin} - \ref{line:dropoutEnd}). The dropout rate update is computationally intensive, so we only perform it when model variation indicates the need for an update. The variation of the model is calculated every fixed rounds using euclidean distance equation $\Delta^t = \sqrt{(R^t -R^{t-h})^2}$. $R^t$ is the set of the rank values and weights at Round $t$. The rank values of each filter are calculated based on the feature maps (outputs of filters) of filters \cite{lin2020hrank}, and the weight of each neuron is calculated as the sum of its corresponding input parameters in the model. The rank values and weights can well represent the importance of filters/neurons as they may have various impact on the output of the model \cite{lin2020hrank}. The dropout rate is updated once there is no decrease in variation after $\mathscr{C}$ rounds to avoid accuracy degradation in critical learning periods \cite{yan2022seizing}. Additionally, the update is performed again when a reduction in variation is observed at a specific round.}

\ma{Within the dropout rate update process, the rank value or the weight (Line \ref{line:rankCalculate}) of each filter or neuron and the dropout rate  (Line \ref{line:localDRate}) of the global model is calculated cross selected devices $SD$.
For each Device $k$, after $t$ rounds of training, the parameters are updated to $W'_{k}$ from the initial parameters $W_{k}$ and the difference is denoted $\Delta_{k} = W_{k} - W'_{k}$. To capture the behavior of the local loss function, we calculate the Hessian matrix of the local loss function, i.e., $H(W'_{k})$, with the eigenvalues sorted in ascending order, \textit{i.e.}, $\{\mathscr{E}^m_{k}|m \in (1, h_{k})\}$, where $h_{k}$ represents the rank of the Hessian matrix and $m$ refers to the eigenvalue index. Next, We construct a function $B_{k}(\Delta_{k}) = H(W'_{k}) - \triangledown L(W_{k})$, where $\triangledown L(\cdot)$ is the gradient of the local loss function. We denote the Lipschitz constant of $B_{k}(\Delta_{k})$ by $\mathscr{L}_{k}$. 
To avoid possible accuracy degradation \cite{zhang2021validating}, we select the first eigenvalue $m_{k}$ that satisfies the condition $\mathscr{E}_{m_{k+1}} - \mathscr{E}_{m_{k}} > 4\mathscr{L}_{k}$. Then, we calculate the dropout rate $d_{k} = \frac{m_{k}}{h_{k}}$. Afterward, we can calculate the aggregated dropout rate for the devices by $d_{avg} = \sum_{k\in SD}{\frac{n_{k}}{\sum_{k\in SD}n_k}d_k}$, which serves as the dropout rate on the device with the most significant dataset. 
We can get the execution time $ET_{max} = ET^o_{max} * d_{avg}$ of this device (Line \ref{line:estTime}), with $ET^o_{max}$ representing the execution without dropout based on previous profiling information. Furthermore, we aggregate the rank values using $R^t = \sum_{k\in SD}{\frac{n_{k}}{\sum_{k\in SD}n_k}R^t_k}$ (Line \ref{line:ranks}). With this, the appropriate dropout rate for each device is recalculated as $d^*_k = $min$\{\frac{ET_{max}}{ET^o_k}, 1\}$ (Line \ref{line:getDRate}). The dropout rate of each filter or neuron $d^i_k$ is then updated based on its rank values or weights $r^{t,i}_{k} \in R^t$ (Lines \ref{line:RateUpdateBegin} - \ref{line:RateUpdateEnd}). In Line \ref{line:nodeRateUpdate}, $N_{CL~or~FL}$ represents the number of filters or neurons in 
all convolutional layers $CL$ or fully connected layers $FL$. Please note that the update operation (Lines \ref{line:RateUpdateBegin} - \ref{line:RateUpdateEnd}) is carried out separately for convolutional layers and fully connected layers.}

Finally, we perform the dropout operation with the updated dropout rates (Lines \ref{line:dropoutBegin} - \ref{line:dropoutEnd}). The mask variables $m^{i,t}_{k}$ are calculated based on the probability $d^i_{k}$ (Line \ref{line:dropout}). Within the dropout operation, when $m^{i,t}_{k}=0$, the filter or the neuron is removed, and otherwise, the weight is updated (Line \ref{line:dropoutOP}). \ma{In addition, while \TheDropoutName{} is primarily discussed in the context of traditional neural network architectures, it is designed to be adaptable and extendable to a variety of model types, including Transformers. The key principles of \TheDropoutName{}-adaptation of dropout rates base on model characteristics-are not inherently limited to convolutional layers or fully connected networks. In the case of Transformers, \TheDropoutName{} can be applied to dense layers and self/cross multi-head attention layers. Specifically, the adaptive dropout mechanism can selectively deactivate neurons in dense layers or individual attention heads based on their computed importance scores. By targeting these components, the method maintains computational efficiency while preserving the overall integrity and performance of the model.}

\ma{\TheDropoutName{} updates the dropout rates of each filter or neuron based on its rank values or weights, determining which filters or neurons in each layer should be dropped out with certain possibility. Within the dropout operation, when $m^{i,t}_{k}=0$, the filter or the neuron is removed, and otherwise, the weight is updated (Line \ref{line:dropoutOP}). The dropout operation of a layer $l$ is a special case, i.e., $m^{i,t}_{k}=0$ for $i \in Layer~l$. Although this may be efficient in calculating the dropout rate, it may bring significant accuracy degradation due to the absence of a whole layer. }

\subsubsection{Model Aggregation}

In order to aggregate the heterogeneous models, we take an efficient model aggregation method in \TheName{}. For the neurons or filters that are not removed (dropped) in all the selected devices, we directly exploit Formula \ref{eq:bilevel} to calculate the aggregated value as follows: $m^i = \sum q_k * m^i_k$, where $m^i$ represents the parameter in the global model, $q^i_k$ represents the weight of Model $k$ in $Q$ and $m^i_K$ represents the parameter in local Model $k$. If the neurons or filters are removed (dropped) in Set Drop, we calculate the aggregated weight based on the following formula: $m^i = (\sum_{k \notin Drop}q_k * m^i_k) / (\sum_{k \notin Drop} q_k)$, which is the sum of the valid weighted parameters (the corresponding filter or neuron is not dropped on the devices) divided by the sum of weights corresponding to the devices of valid parameters (the corresponding filter or neuron is not dropped on the devices).

\ma{
\subsection{The Integration of \TheHetName{} and \TheDropoutName{}}

In this section, we detail the integration of the proposed methods \TheHetName{} and \TheDropoutName{} into \TheName{} framework. The \TheName{} framework synergistically combines the strengths of \TheHetName{} and \TheDropoutName{} to addresses both the model performance and training efficiency problems inherent in FL with non-IID data distributions and heterogeneous devices.

Specifically, during each training round in \TheName{}, the server begins by performing the neuron-adaptive dropout mechanism of \TheDropoutName{} to generate sub-models for the selected devices. The dropout rates for each filter or neuron are dynamically adjusted based on the importance of filters or neurons and the resource constraints of the device. The generated sub-models are then distributed to the selected devices, where local training is conducted using their respective local datasets. After local training, the devices upload their updated sub-models back to the server. The server then performs dynamic heterogeneous model aggregation using the \TheHetName{} method, which exploits the non-IID degrees of the heterogeneous data on the selected devices to adjust the aggregation weights. The integration offers synergistic benefits. The slight computational overhead introduced by \TheHetName{} is effectively compensated by the efficiency gains from \TheDropoutName{}, resulting in accelerating the training process and improving model accuracy.

\m{The integration of \TheHetName{} and \TheDropoutName{} into the \TheName{} framework establishes a synergistic relationship that simultaneously addresses statistical data heterogeneity and device heterogeneity in FL. While both methods demonstrate effectiveness independently, their combination yields complementary advantages. \TheHetName{} addresses model performance challenges through dynamic weight adjustment based on non-IID degrees, while \TheDropoutName{} optimizes communication and computation efficiency via adaptive dropout operations. The computational overhead introduced by the dynamic weight calculations of \TheHetName{} is compensated by the efficiency gains from the reduced model size of \TheDropoutName{}. Furthermore, the weighted aggregation mechanism of \TheHetName{} mitigates potential accuracy degradation from the dropout operations of \TheDropoutName{} by appropriately valuing the contribution of each device according to its data distribution characteristics. Additionally, the resource optimization of \TheDropoutName{} enables more training rounds within constrained time budgets, providing \TheHetName{} with increased opportunities to refine weight adjustments. Thus, the combination of these two approaches in \TheName{} yields superior performance in terms of both accuracy and efficiency compared to either method individually or state-of-the-art baseline approaches.}
}

\begin{table*}[t]
\tiny
\vspace{-4mm}
  \caption{Performance comparison on CIFAR-10 and CIFAR-100 datasets. \TheName{}, \TheHetName{}, and \TheDropoutName{} are our proposed methods. The best results are highlighted in \textbf{bold} and the second best results are highlighted with \underline{underline} and \emph{italic}.
}
  \label{tab:cmp_DHAD_10}
  \centering
  \begin{tabular}{c|ccc|ccc|ccc|ccc}
    \toprule
    \multirow{3}{*}{Method} & \multicolumn{6}{c|}{CIFAR-10}  & \multicolumn{6}{c}{CIFAR-100} \\
    \cmidrule(r){2-13} & \multicolumn{3}{c|}{LeNet}  & \multicolumn{3}{c|}{CNN} & \multicolumn{3}{c|}{LeNet}  & \multicolumn{3}{c}{CNN} \\
    \cmidrule(r){2-13} & Acc  & Time  & MLPs  & Acc  & Time & MLPs & Acc  & Time  & MLPs  & Acc  & Time & MLPs \\
    \midrule
    \textbf{\TheName{}}      (ours)   &  \textbf{0.633}    & \textbf{1371}   & 0.618   &  \textbf{0.749}    & \textbf{3074}   & 4.324  &  \textbf{0.300}    & \textbf{2127}   & 0.609   &  \textbf{0.423}    & \textbf{4952}   & 4.416\\
    \textbf{\TheHetName{}}   (ours)     &  \underline{\textit{0.621}}    & 1426   & 0.652    &  \underline{\textit{0.744}}    & \underline{\textit{3322}}   & 4.549    &  \underline{\textit{0.297}}    & 2495   & 0.659    &  \underline{\textit{0.420}}    & \underline{\textit{5018}}   & 4.554\\
    \textbf{\TheDropoutName{}}  (ours)  &  0.595    & \underline{\textit{1394}}   & 0.615   &  0.732    & 3632   & 4.324   &  0.295    & \underline{\textit{2175}}   & 0.604   &  0.412    & 5324   & 4.330 \\
    FedAvg                 &  0.565    & 2366  & 0.652   &  0.730    & 5265  & 4.549      &  0.263    & 5714  & 0.659   &  0.414    & 5089  & 4.554\\
    FedProx                &  0.588    & 1994   & 0.652   &  0.730    & 3912  & 4.549    &  0.274    & 3750   & 0.659   &  0.415    & 5613  & 4.554 \\
    FedNova                &  0.582    & 2200  & 0.652   &  0.728    & 5296  & 4.549    &  0.286    & 3082  & 0.659   &  0.416    & 5514  & 4.554 \\
    MOON                   &  0.610    & 2871   & 0.689   &  0.741    & 4513  & 4.584     &  0.281    & 3682   & 0.712   &  0.405    & 7031  & 4.628\\
    FedDyn                 &  0.543    & 2256   & 0.652   &  0.700    & NaN  & 4.549    &  0.228    & NaN   & 0.659   &  0.344    & NaN  & 4.554\\
    FedDST                 &  0.582    & 2178   & 0.645   &  0.645    & NaN  & 4.504    &  0.168    & NaN   & 0.652   &  0.181    & NaN  & 4.508\\
    PruneFL                &  0.392    & NaN   & 0.652   &  0.524    & NaN  & 4.549    &  0.127    & NaN   & 0.659   &  0.187    & NaN  & 4.554\\
    AFD                    &  0.511    & 3909   & 0.637   &  0.589    & NaN   & 4.540   &  0.270    & 3768   & 0.644   &  0.228    & NaN   & 4.550\\
    FedDrop                &  0.585    & 3993  & 0.636   &  0.725    & 5385  & 4.533     &  0.254    & NaN  & 0.640   &  0.405    & 6540  & 4.538 \\
    FjORD                  &  0.593   & 1663     & \textbf{0.482}       & 0.723    & 4556  & \textbf{3.474}     &  0.292    & 2262  & \textbf{0.489}  &  0.369    & NaN  & \textbf{3.479}\\
    MOON+FjORD                &  0.608    & 2821  & \underline{\textit{0.526}}   &  0.737    & 4609  & \underline{\textit{3.959}} & 0.288 & 3376 & \underline{\textit{0.518}} &  0.416 & 5424 & \underline{\textit{4.043}}    \\
    \bottomrule
  \end{tabular}
\end{table*}

\section{Experimental Evaluation}
\label{sec:experiments}

\m{In this section, we compare \TheName{} with 13 state-of-the-art baseline approaches, exploiting four models and four datasets. We choose baseline methods that specifically address data and device heterogeneity challenges in FL. We includes methods (e.g. FedAvg \cite{mcmahan2017communication}, FedProx \cite{Li2020FedProx}, and FedNova \cite{Wang2020Tackling}) which set foundational standards for handling non-IID data, alongside recent advancements (e.g. MOON \cite{li2021model}, FedDyn \cite{acar2021federated}, FedAS \cite{yang2024fedas}, FedKTL \cite{zhang2024upload} and AugFL \cite{yue2025augfl} ). We selected methods (e.g. FedDST \cite{bibikar2022federated}, PruneFL \cite{jiang2022model}, AFD \cite{Bouacida2021FedDropout}, FedDrop \cite{wen2022federated}, and FjORD \cite{horvath2021fjord}) for device heterogeneity. Additionally, we integrated combinations (e.g. MOON and FjORD), ensuring a comprehensive and fair comparison.} First, we present the experimental setup. Then, we present our experimentation results.

\subsection{Experimental Setup}
\label{subsec:expSetup}

We consider a standard FL environment composed of a server and $100$ devices, each with its  stored  data, and we randomly select $10$ devices in each round. We simulate this environment
using 44 Tesla V100 GPU cards. 
We use the datasets of CIFAR-10\cite{krizhevsky2009learning}, CIFAR-100 \cite{krizhevsky2009learning}, SVHN \cite{netzer2011reading}, and TinyImageNet \cite{le2015tiny}. We report the results of four models, i.e., a simple synthetic CNN network (CNN), LeNet5 (LeNet) \cite{lecun1989handwritten}, VGG11 (VGG) \cite{simonyan2015very}, and ResNet18 (ResNet) \cite{he2016deep}.
The complexity of the four models are: ResNet (11,276,232 parameters) $>$ VGG (9,750,922 parameters) $>$ CNN (122,570 parameters) $>$ LeNet (62,006 parameters).
We utilize a decay rate in both the training process of devices and the control parameter adjustment. We take 500 as the maximum number of rounds for LeNet, CNN, and VGG and 1000 as that for ResNet. We fine-tune the hyper-parameters and present the best performance for each approach. 
The accuracy, the training time to achieve target accuracy, and the computation costs are shown in Tables \ref{tab:cmp_DHAD_10} and \ref{tab:cmp_DHAD_100}. \ma{We report the training time (s) to achieve the specific accuracies. These target accuracies were selected based on empirical observations of the training process and represent points where the models demonstrate significant learning progress while still having room for further improvement. Instead of using a predefined standard, such as the average accuracy of all methods or a fraction of the accuracy of the best method, these specific values were chosen to provide a clear and practical benchmark for comparing the convergence speeds of different methods.}

We exploit four datasets including CIFAR-10 (60,000 images with 10 classes), CIFAR-100 (60,000 images with 100 classes), SVHN (99289 images with 10 classes), and TinyImageNet (100,000 images with 200 classes). We allocate each device a proportion of the samples of each label according to Dirichlet distribution \cite{LiStudy2022}. Specifically, we sample $p_k \sim Dir(\beta)$ and allocate a proportion of the samples in class $k$ to Device $j$. $Dir(\beta)$ is the Dirichlet distribution with a concentration parameter $\beta$ (0.5 by default), the values of which corresponds to various non-IID degrees of the data in each device. We exploit the same method to handle the other datasets. We use a learning rate of 0.1 for the model update, a learning rate decay of 0.99 for the model update, a batch size of 10, and a local epoch ($\tau$) of 5. The values of the other hyper-parameters are shown in Table \ref{tab:parameters}. 

\begin{table*}[t!]
\tiny
\vspace{1mm}
    \caption{Performance comparison on SVHN and TinyImageNet datasets. ``Time$^*$'' represents $*10^2$s. \TheName{}, \TheHetName{}, and \TheDropoutName{} are our proposed methods. The best results are highlighted in \textbf{bold} and the second best results are highlighted with \underline{underline} and \emph{italic}.
}
  \label{tab:cmp_DHAD_100}
  \centering
  \begin{tabular}{c|ccc|ccc|ccc|ccc}
    \toprule
    \multirow{3}{*}{Method} & \multicolumn{9}{c|}{SVHN}  & \multicolumn{3}{c}{TinyImageNet} \\
    \cmidrule(r){2-13} & \multicolumn{3}{c|}{LeNet}  & \multicolumn{3}{c|}{CNN} & \multicolumn{3}{c|}{VGG} & \multicolumn{3}{c}{ResNet}\\
    \cmidrule(r){2-13} & Acc  & Time  & MLPs  & Acc  & Time & MLPs & Acc  & Time$^*$ & MLPs & Acc  & Time$^*$ & MLPs \\
    \midrule
    \textbf{\TheName{}} (ours)     &  \textbf{0.8760}    & \textbf{1645}   & 0.620   &  \textbf{0.9110}    & \textbf{1177}   & \underline{\textit{4.127}} & \textbf{0.9367} & \textbf{131} & 138.9 & \textbf{0.3765}  & 859 & 474.3\\
    \textbf{\TheHetName{}} (ours)  &  \underline{\textit{0.8755}}    & \underline{\textit{1777}}   & 0.652    &  \underline{\textit{0.9093}}    & \underline{\textit{1228}}   & 4.549 & \underline{\textit{0.9321}} & \underline{\textit{132}} & 153.3 & 0.3724 & 1003 & 558.0\\
    \textbf{\TheDropoutName{}} (ours)  &  0.8665    & 2246   & 0.615   &  0.9056    & 1650   & 4.172 & 0.9309 & 141 & 139.4 & \underline{\textit{{0.3763}}} & \textbf{615} & \underline{\textit{472.3}}\\
    FedAvg              &  0.8663    & 2347  & 0.652   &  0.9055    & 1595  & 4.549  & 0.9286 & 154 & 153.3 & 0.3638 & 1276 & 558.0\\
    FedProx             &  0.8652    & 2374   & 0.652   &  0.8999    & 1864  & 4.549 & 0.9260 & 159 & 153.3 & 0.3694 & 1256 & 558.0\\
    FedNova             &  0.8663    & 2333  & 0.652   &  0.9023    & 1454  & 4.549 & 0.9287 & 149 & 153.3 & 0.3598 & 1579 & 558.0\\
    MOON                &  0.8550    & 4843   & 0.689   &  0.8849    & 3558  & 4.584 & 0.1870 & NaN & 153.2 & 0.1536 & NaN & 558.0\\
    FedDyn              &  0.8338    & NaN   & 0.652   &  0.8813    & 2174  & 4.549 & 0.1646 & NaN & 153.3 & 0.0792 & NaN & 558.0\\
    FedDST              &  0.8554    & 2774   & 0.647   &  0.8845    & 1906  & 4.299 & 0.1851 & NaN & 148.7 & 0.3014 & NaN & \textbf{446.4}\\
    PruneFL              &  0.8450    & NaN   & 0.652   &  0.8994    & 1668  & 4.549 & 0.1646 & NaN & 153.3 & 0.3322 & NaN & 558.0\\
    AFD                 &  0.8504    & 3185   & 0.648   &  0.5046    & NaN   & 4.531 & 0.7683 & NaN & 152.9 & 0.3727 & \underline{\textit{761}} & 557.7\\
    FedDrop             &  0.8568    & 3031  & 0.636   &  0.8985    & 1728  & 4.533 & 0.9285 & 170 & 152.3 & 0.3748 & 865 & 557.2\\
    FjORD             &  0.8434    & 2185  & \textbf{0.482}   &  0.900    & 1730  & \textbf{3.959} & 0.9107 & 154 & \textbf{116.5} & 0.3463 & NaN & 489.7 \\
    MOON+FjORD             &  0.8414    & NaN  & \underline{\textit{0.528}}   &  0.8964    & 1765  & 4.279 & 0.1992 & NaN & \underline{\textit{118.4}}& 0.1456 & NaN & 536.4 \\
    \bottomrule
  \end{tabular}
\end{table*}

The accuracy and training time with \TheHetName{}, \TheDropoutName{} and various baseline methods for LeNet and CNN with CIFAR-10 and CIFAR-100 are shown in Figure \ref{fig:cmp_dhad_cifar}, and the results for LeNet, CNN, and VGG with SVHN and ResNet with TinyImageNet are shown in Figure \ref{fig:cmp_dhad_svhn_tinyimagenet}. We adaptively attribute dropout rates for each filter and neuron, which yields stable performance. The results of accuracy and rounds are shown in Figures \ref{fig:acc_rounds_dhad}, which reveal \TheName{} corresponds to more stable training process than other methods. 

\m{We evaluate the performance of our proposed methods using three key metrics: accuracy, training time, and computation cost. Accuracy measures the proportion of correctly classified samples in the test dataset using the trained global model. Training time is measured in seconds (s) and represents the time required to achieve a target accuracy level, which varies by model and dataset combination. For CIFAR-10, we set target accuracies of 0.54 for LeNet and 0.72 for CNN; for CIFAR-100, 0.26 for LeNet and 0.41 for CNN; for SVHN, 0.85 for LeNet, 0.88 for CNN, and 0.91 for VGG; and for TinyImageNet, 0.36 for ResNet. Computation cost is measured in Million Floating Point Operations (MFLOPs), which quantifies the computational complexity of the model training process. This metric is calculated as the sum of all floating-point operations required for forward and backward passes during model training, divided by $10^6$. Lower MFLOPs values indicate more computationally efficient methods.}

\subsection{Evaluation of Our Approach}

In this section, we present the experimental results for \TheHetName{} and \TheDropoutName{}. Then, we show the evaluation results of the combination of \TheHetName{} and \TheDropoutName{}, i.e., \TheName{}.

\subsubsection{Evaluation of \TheHetName{}} In this section, we compare the accuracy of \TheHetName{} with FedAvg \cite{mcmahan2017communication}, FedProx \cite{Li2020FedProx}, FedNova \cite{Wang2020Tackling}, MOON \cite{li2021model}, FedDyn \cite{acar2021federated}, \m{FedAS \cite{yang2024fedas}, FedKTL \cite{zhang2024upload} and AugFL \cite{yue2025augfl}}. \TheHetName{} dynamically adjusts the weights of uploaded models with the non-IID degree of the heterogeneous data in the model aggregation process, corresponding to superior performance.
As shown in Table \ref{tab:cmp_DHAD_10}, \TheHetName{} achieves a significantly higher accuracy compared with FedAvg (5.6\%), FedProx (3.3\%), FedNova (4.0\%), MOON (1.1\%), and FedDyn (7.8\%) for LeNet with CIFAR-10. We find that FedDyn leads to lower accuracy than FedAvg. Although FedNova, FedProx, and MOON outperform FedAvg, they correspond to lower accuracy than \TheHetName{} due to simple model aggregation. In addition, we find similar results for CNN with CIFAR-10, while the advantages of \TheHetName{} (from 0.3\% to 4.5\%) are relatively smaller than those of LeNet. Furthermore, Table \ref{tab:cmp_DHAD_10} reveals significant advantages (up to 3.4\% for FedAvg, 2.3\% for FedProx, 1.1\% for FedNova, 2.2\% for MOON, and 8.3\% for FedDyn) of \TheHetName{} for LeNet and CNN with CIFAR-100. As shown in Table \ref{tab:cmp_DHAD_100}, we can observe that \TheHetName{} corresponds to much higher accuracy compared with FedAvg (up to 0.9\%), FedProx (up to 1.0\%), FedNova (up to 1.3\%), MOON (up to 74.5\%), and FedDyn (up to 76.8\%) while dealing with SVHN and TinyImageNet. 

\subsubsection{Evaluation of \TheDropoutName{}}
\label{subsec:dropout}

In this section, we present the comparison results between \TheDropoutName{} and FedAvg \cite{mcmahan2017communication}, Federated Dynamic Sparse Training (FedDST) \cite{bibikar2022federated}, PruneFL \cite{jiang2022model},  Adaptive Federated Dropout (AFD) \cite{Bouacida2021FedDropout}, Federated Dropout (FedDrop) \cite{wen2022federated}, and FjORD \cite{horvath2021fjord}, in terms of both the accuracy and efficiency. We take the dropout rates of 0.25 for AFD as reported in \cite{Bouacida2021FedDropout} and 0.5 for FedDrop, which corresponds to the highest accuracy reported in \cite{wen2022federated}. In addition, we fine-tune the hyper-parameters of FjORD to achieve the best accuracy.

We find that the total training time of \TheDropoutName{} is much shorter than that of FedAvg (up to 8.6\%), and FedDrop (up to 4.5\%), AFD (up to 21.9\%), but longer than that of FjORD (up to 27.6\%).
However, as shown in Table \ref{tab:cmp_DHAD_10}, the accuracy of \TheDropoutName{} is significantly higher than that of FedAvg (up to 3.2\%), FedDrop (up to 4.1\%), AFD (up to 18.4\%), and FjORD (up to 4.3\%) for LeNet and CNN with CIFAR-10 and CIFAR-100. In addition, \TheDropoutName{} leads to the shortest training time (up to 31.0\% compared with FedAvg, up to 36.0\% compared with FedDST, 55.1\% compared with FedDrop, and 25.8\% compared with FjORD) to achieve the target accuracy with LeNet (AFD and PruneFL cannot achieve the target accuracy) and corresponds to smaller computation costs compared with FedAvg (up to 8.6\%), FedDST (up to 7.4\%), PruneFL (up to 9.1\%), AFD (up to 6.6\%), and FedDrop (up to 4.5\%) because of the adaptive dropout rates. We can get similar results on LeNet, CNN, and VGG over SVHN and ResNet over TinyImageNet as shown in Table \ref{tab:cmp_DHAD_100}. \ma{\TheDropoutName{} is designed to reduce the communication and computation costs in order to accelerate the training process while achieving excellent accuracy with heterogeneous devices. As shown in Table \ref{tab:cmp_DHAD_10} and Table \ref{tab:cmp_DHAD_100}, the total training time of \TheDropoutName{} is much shorter than that of \TheHetName{} (up to 38.7\%). While the total training time of \TheHetName{} is shorter sometimes, \TheDropoutName{} leads to consistent lower computation costs (up to 15.4\%) compared with \TheHetName{}.}

\begin{table*}[t]
\tiny
\vspace{-3mm}
\caption{Values of hyper-parameters in the experimentation. ``TI'' represents TinyImageNet.}
\vspace{-5mm}
\label{tab:parameters}
\begin{center}
\begin{tabular}{c|c|c|c|c|c|c|c|c|c}
\toprule
\multicolumn{2}{c|}{\multirow{3}{*}{Name}} & \multicolumn{8}{c}{Values} \\
\cline{3-10}
\multicolumn{2}{c|}{}& \multicolumn{3}{c|}{LeNet} & \multicolumn{3}{c|}{CNN}  & \multicolumn{1}{c|}{VGG}  & \multicolumn{1}{c}{ResNet}\\
\cline{3-10}
\multicolumn{2}{c|}{}& CIFAR-10 & CIFAR-100 & SVHN                  & CIFAR-10   & CIFAR-100  & SVHN                  & SVHN & \multicolumn{1}{c}{TI } \\
\hline
\multirow{4}{*}{FedDH \& FedAD} & $\lambda_{\gamma}$ & 0.0001 & 0.01 & 0.01 & 0.001 & 0.001 & 0.01 & 0.001 & 0.1\\
& $decay_{\gamma}$ & 0.999 & 0.99 & 0.999 & 0.9999 & 0.9999& 0.999 & 0.9 & 0.9\\
& $\lambda_{B}$ & 0.0001 & 0.1 & 0.01 & 0.1 &0.0001 & 0.001 & 0.0001 & 0.1\\
& $decay_B$ & 0.99 & 0.99 & 0.99 & 0.99 & 0.99 & 0.99 & 0.999 & 0.999\\
\hline
\multirow{4}{*}{FedDHAD} & $\lambda_{\gamma}$ & 0.01 & 0.01 & 0.0001 & 0.0001 & 0.01 & 0.1 & 0.1 &0.1\\
& $decay_{\gamma}$ & 0.9999 & 0.9999 & 0.999 & 0.999 &0.9999 & 0.99 & 0.99 & 0.9999\\
& $\lambda_{B}$ & 0.0001 & 0.0001 & 0.1 & 0.1  & 0.0001 & 0.01 & 0.01 & 0.001\\
& $decay_B$ & 0.99 & 0.99 & 0.99 & 0.99 & 0.99 & 0.99 & 0.99 & 0.99\\
\bottomrule
\end{tabular}
\vspace{-6mm}
\end{center}
\end{table*}

While the computation costs of FjORD are smaller compared with \TheDropoutName{}, \TheDropoutName{} leads to significantly higher accuracy (up to 4.3\%).
As FedDrop is only applied to the fully connected layers, it corresponds to a longer training time compared with AFD, FjORD, and \TheDropoutName{}. Since FedDrop only considers random dropout operation, its accuracy is slightly lower than that of FedAvg for both LeNet (up to 0.9\%) and CNN (up to 0.5\%). While AFD considers the influence of dropout on the loss, it ignores the phenomenon that the loss is naturally reduced during the training process and thus corresponds to significantly inferior accuracy (up to 18.4\% compared with \TheDropoutName{}, 18.6\% compared with FedAvg, and 17.7\% compared with FedDrop). Although FjORD can adaptively tailor the model width for heterogeneous devices, it cannot achieve lossless pruning and corresponds to inferior accuracy (up to 4.3\%) compared with \TheDropoutName{}. \TheDropoutName{} adaptively calculates the dropout rate,
which leads to excellent final accuracy within a short training time. \ma{Notably, the use of VGG model in SVHN results in significantly lower accuracy, primarily due to the inherent complexity and the large number of parameters (e.g. 9,750,922 parameters) of VGG. This complexity makes VGG vulnerable to the challenges of data heterogeneity. Furthermore, PruneFL employs pruning methods that involve lossy operations, which can further degrade accuracy.}

\subsubsection{Evaluation of \TheName{}}

In this section, we integrate \TheHetName{} and \TheDropoutName{} into \TheName{}, which we
evaluate in comparison with FedAvg, FedProx, FedNova, MOON, FedDyn, \m{FedAS, FedKTL, AugFL,} FedDST, PruneFL, AFD, FedDrop, FjORD and a combination of MOON and FjORD, in terms of both accuracy and efficiency. 

\begin{figure*}[t]
\centering
\subfigure[LeNet \& CIFAR-10]{
\includegraphics[width=0.22\linewidth]{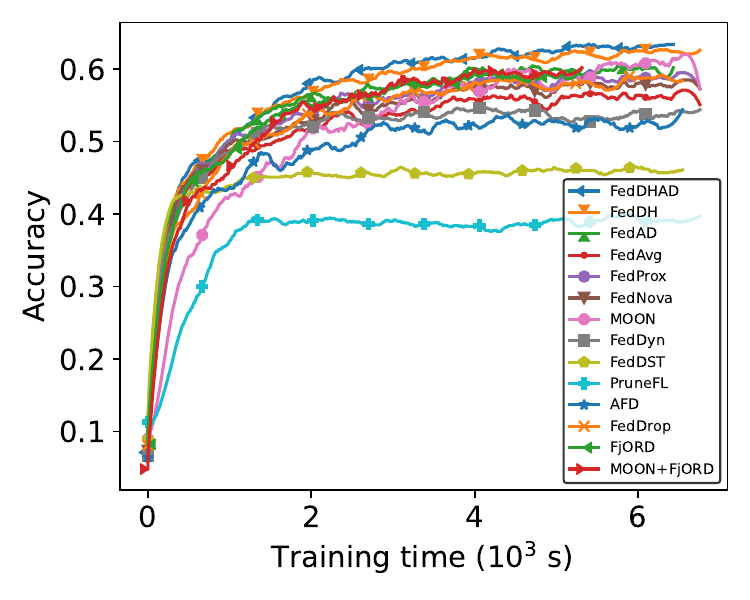}
\label{fig:cmp_dhad_lenet_10}
}
\vspace{-1mm}
\subfigure[CNN \& CIFAR-10]{
\includegraphics[width=0.22\linewidth]{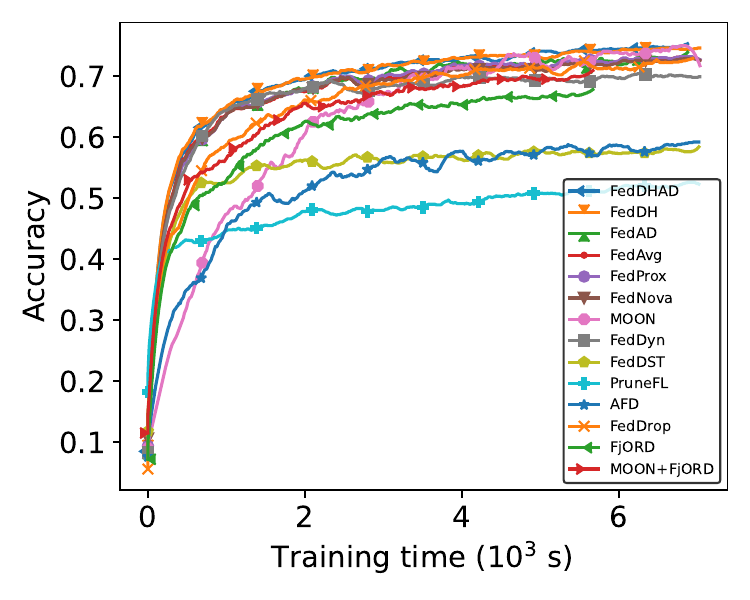}
\label{fig:cmp_dhad_cnn_10}
}
\vspace{-1mm}
\subfigure[LeNet \& CIFAR-100]{
\includegraphics[width=0.22\linewidth]{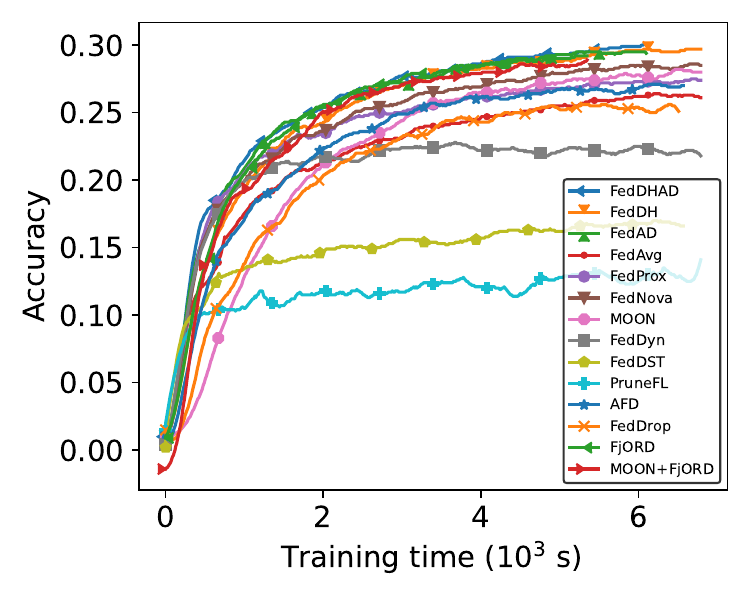}
\label{fig:cmp_dhad_lenet_100}
}
\vspace{-1mm}
\subfigure[CNN \& CIFAR-100]{
\includegraphics[width=0.22\linewidth]{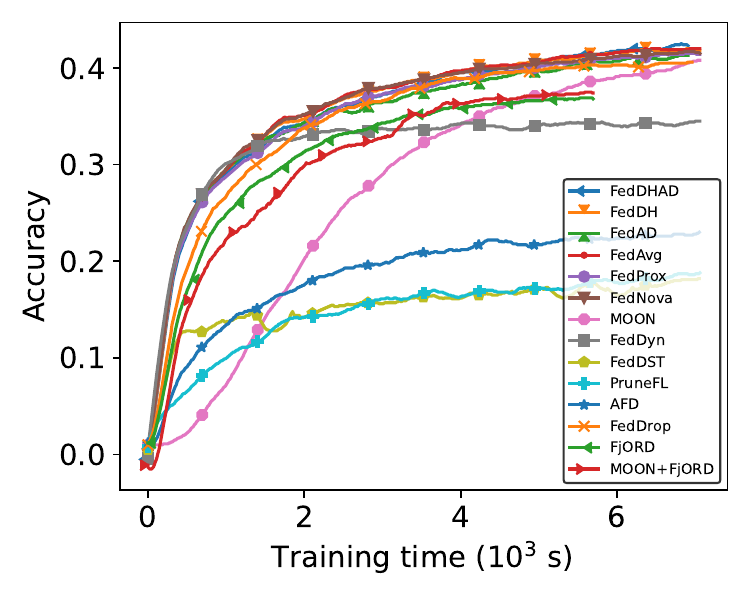}
\label{fig:cmp_dhad_cnn_100}
}
\caption{The accuracy and training time with \TheName{} and various baseline methods for LeNet and CNN with CIFAR-10 and CIFAR-100.}
\vspace{-6mm}
\label{fig:cmp_dhad_cifar}
\end{figure*}

\begin{figure*}[t]
\centering
\subfigure[LeNet \& SVHN]{
\includegraphics[width=0.22\linewidth]{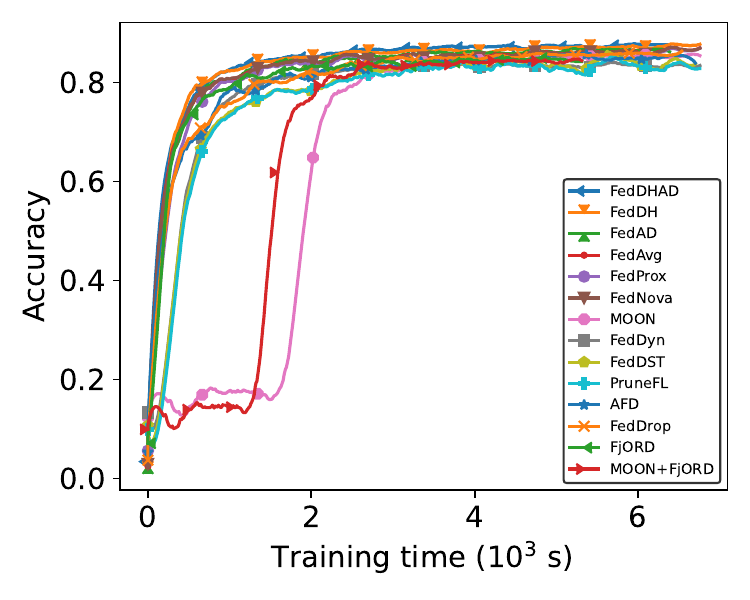}
\label{fig:cmp_dhad_lenet_svhn}
}
\vspace{-3mm}
\subfigure[CNN \& SVHN]{
\includegraphics[width=0.22\linewidth]{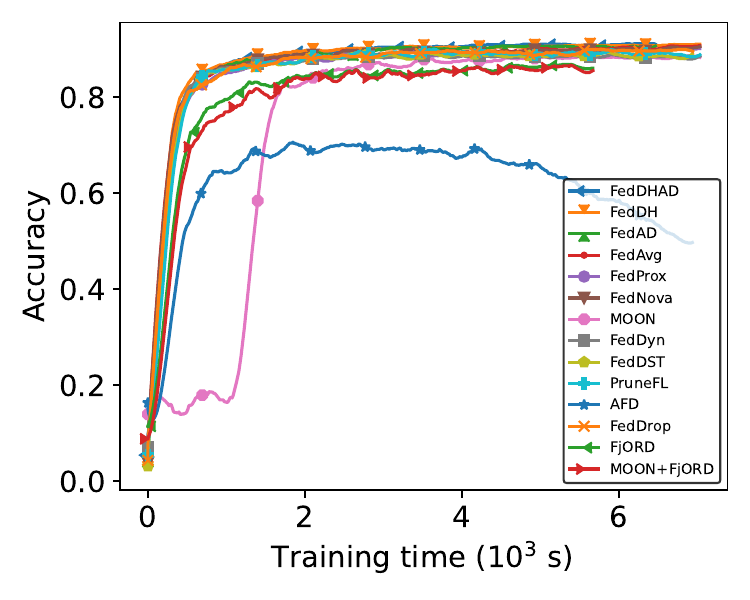}
\label{fig:cmp_dhad_cnn_svhn}
}
\subfigure[VGG \& SVHN]{
\includegraphics[width=0.22\linewidth]{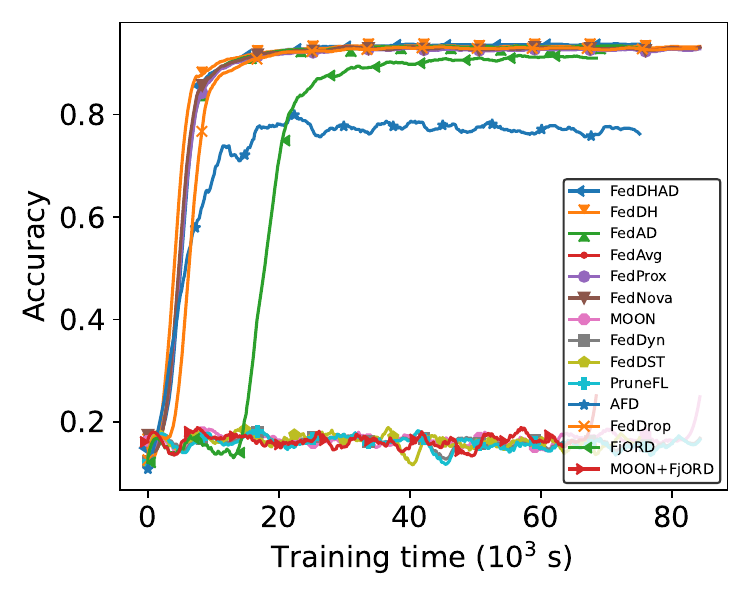}
\label{fig:cmp_dhad_vgg_svhn}
}
\subfigure[ResNet \& TinyImageNet]{
\includegraphics[width=0.22\linewidth]{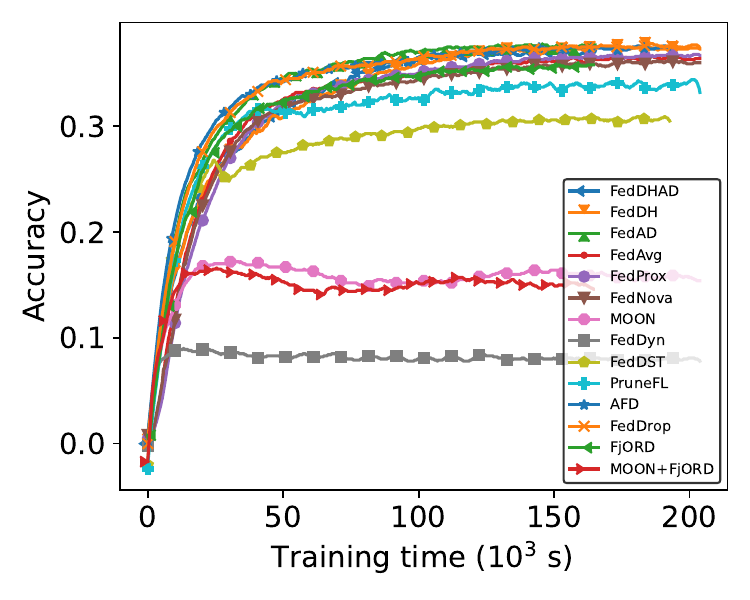}
\label{fig:cmp_dhad_resnet_tinyimagenet}
}
\caption{The accuracy and training time with \TheName{} and various baseline methods for LeNet, CNN, and VGG with SVHN and ResNet with TinyImageNet.}
\vspace{-6mm}
\label{fig:cmp_dhad_svhn_tinyimagenet}
\end{figure*}




As shown in Tables \ref{tab:cmp_DHAD_10} and \ref{tab:cmp_DHAD_100}, \TheName{} leads to high accuracy and fast convergence. The accuracy of \TheName{} is significantly higher than FedAvg (up to 6.7\%
), FedProx (up to 4.5\%
), FedNova (up to 5.1\%
), MOON (up to 74.5\%
), FedDyn (up to 76.8\%
), FedDST (up to 78.6\%
), PruneFL (up to 61.5\%
 ), 
AFD (up to 19.5\%
), FedDrop (up to 4.8\%
), FjORD (up to 4.7\%) and the combination of MOON and FjORD (up to 8.9\%) for LeNet, CNN, VGG, and ResNet, because of dynamic adjustment of control parameters and adaptive dropout operations. 
In addition, \TheName{} corresponds to the shortest training time 
(up to 1.69 times shorter 
compared with FedAvg, 43.3\% 
compared with FedProx, 42.0\% 
compared with FedNova, up to 2.02 times shorter 
compared with MOON, 45.9\% 
compared with FedDyn,
40.7\% compared with FedDST, 29.4\% compared with PruneFL, 
1.85 times 
compared with AFD, and 1.91 times 
compared with FedDrop, 32.5\% compared with FjORD, and 51.4\% compared with the combination of MOON and FjORD) to achieve target accuracy for LeNet, CNN, VGG, and ResNet. Although it corresponds to a slightly longer time to achieve the target time compared with AFD (0.47 times longer), \TheName{} leads to higher accuracy (0.2\% compared with AFD) with ResNet.  \ma{} In addition, \TheName{} incurs smaller computation costs (up to 14.9\% compared with FedAvg, FedProx, FedNova, FedDyn, and PruneFL, 15.0\% compared with MOON, 10.1\% compared with AFD, and 9.8\% compared with FedDrop) than those presented above, while corresponding to slightly higher computation costs (up to 28.6\%) than FjORD. \ma{In addition, the total training time of \TheName{} is shorter than that of \TheDropoutName{} (up to 28.7\%). This is primarily because of the dynamic adjustment of the weights of local models, which corresponds to higher accuracy. \TheName, by incorporating \TheHetName{}, adjusts the weights of local models during the model aggregation process based on the non-IID degrees of the heterogeneous data, achieving more efficient model aggregation and global model updates compared with \TheDropoutName{}. However, \TheDropoutName{} only performs the dropout rate update. The dropout rate update operation adjusts the dropout rates based on the rank or weight of filters and neurons, which can avoid performance degradation and exist in \TheName{} as well. Thus, the combination of \TheHetName{} and \TheDropoutName{} leads to a more efficient optimization process in \TheName{} compared to \TheDropoutName{} alone. }

We carry out experiments to evaluate the performance of \TheName{} with various network bandwidth, various non-IID degrees of data, and different numbers of devices. When the network connection becomes worse, the advantages of \TheName{} become significant (from 64.9\% to 82.7\%) because of communication reduction brought by the adaptive dropout. As it can exploit the non-IID degrees of the data on each device to adjust the weights of multiple models, \TheName{} achieves superior accuracy (from 6.9\% to 10.7\%) even with the data of high non-IID degrees. Furthermore, \TheName{} incurs 13.5\% higher accuracy and 8.6\% shorter training time to achieve target accuracy with different numbers of devices and the significance becomes obvious with a large scale of devices.

\begin{figure*}[t]
\centering
\subfigure[LeNet \& CIFAR-10]{
\includegraphics[width=0.22\linewidth]{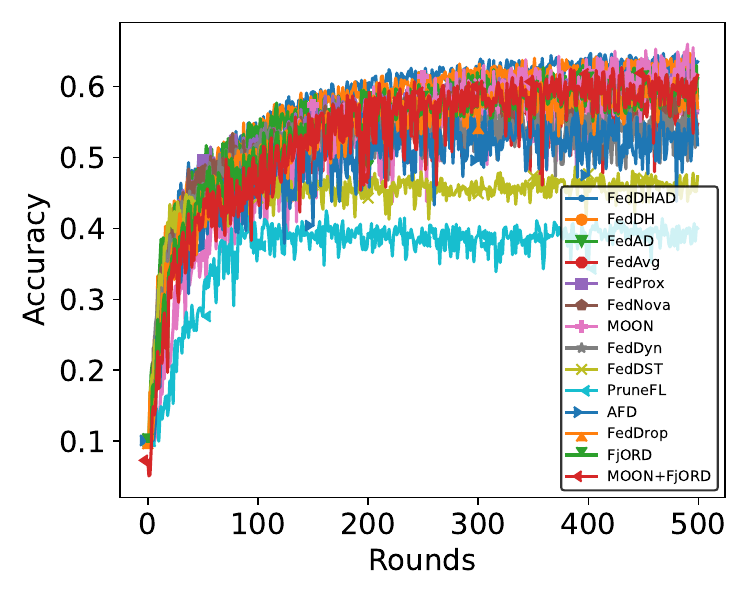}
\label{fig:acc_rounds_dhad_lenet_10}
}
\vspace{-1mm}
\subfigure[CNN \& CIFAR-10]{
\includegraphics[width=0.22\linewidth]{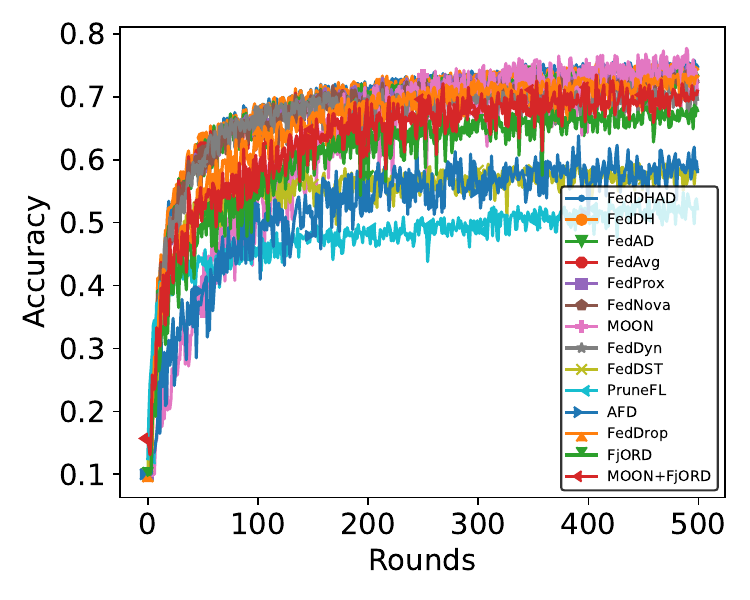}
\label{fig:acc_rounds_dhad_cnn_10}
}
\vspace{-1mm}
\subfigure[LeNet \& CIFAR-100]{
\includegraphics[width=0.22\linewidth]{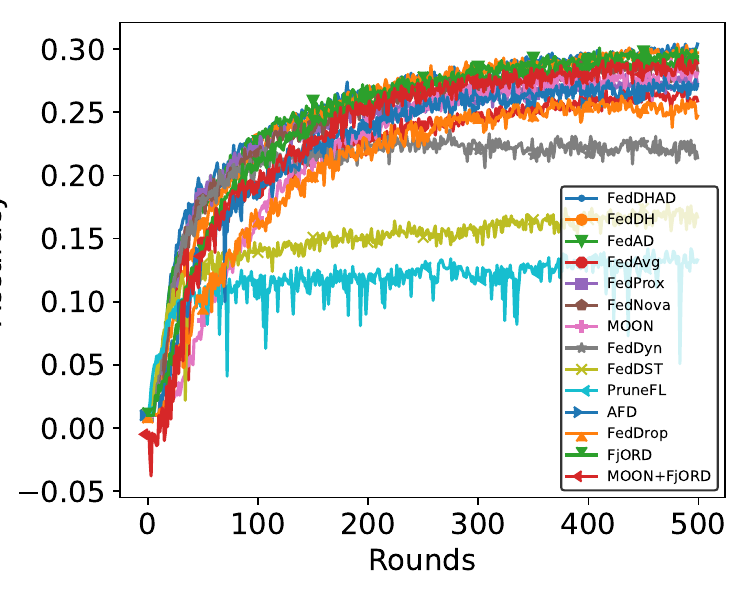}
\label{fig:acc_rounds_dhad_lenet_100}
}
\vspace{-1mm}
\subfigure[CNN \& CIFAR-100]{
\includegraphics[width=0.22\linewidth]{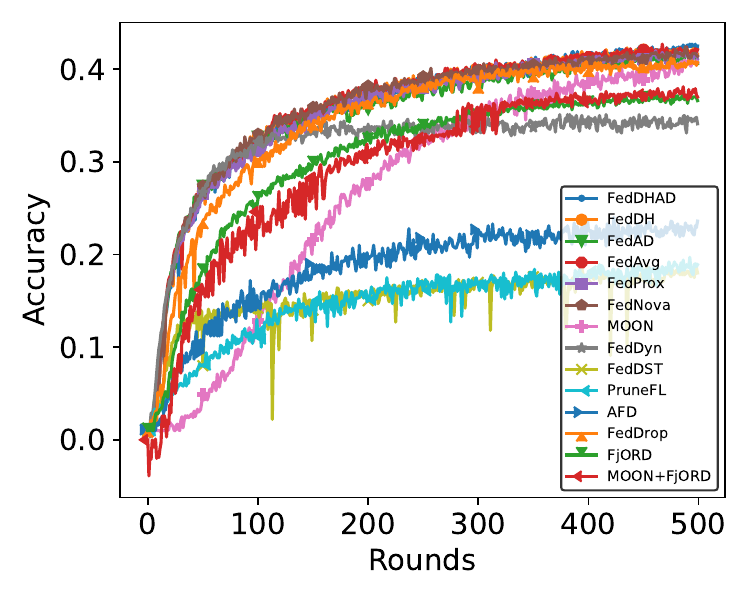}
\label{fig:acc_rounds_dhad_cnn_100}
}
\vspace{-1mm}
\subfigure[LeNet \& SVHN]{
\includegraphics[width=0.22\linewidth]{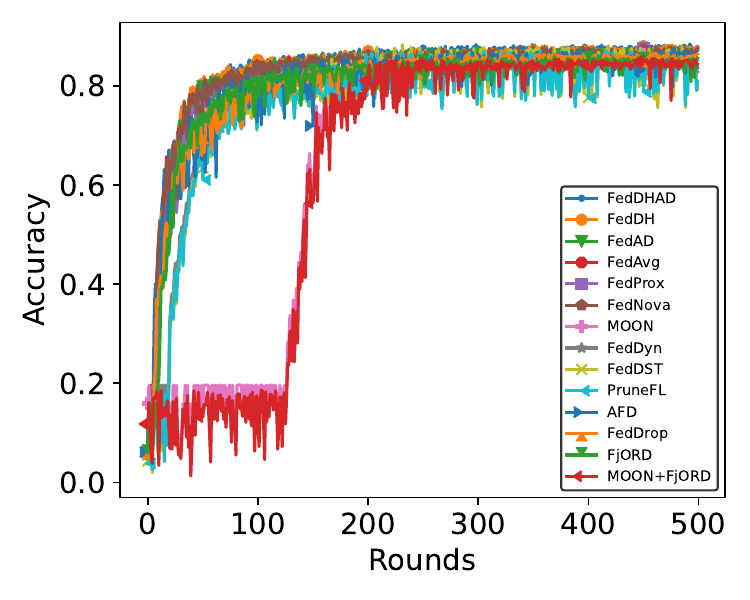}
\label{fig:acc_rounds_dhad_lenet_svhn}
}
\vspace{-1mm}
\subfigure[CNN \& SVHN]{
\includegraphics[width=0.22\linewidth]{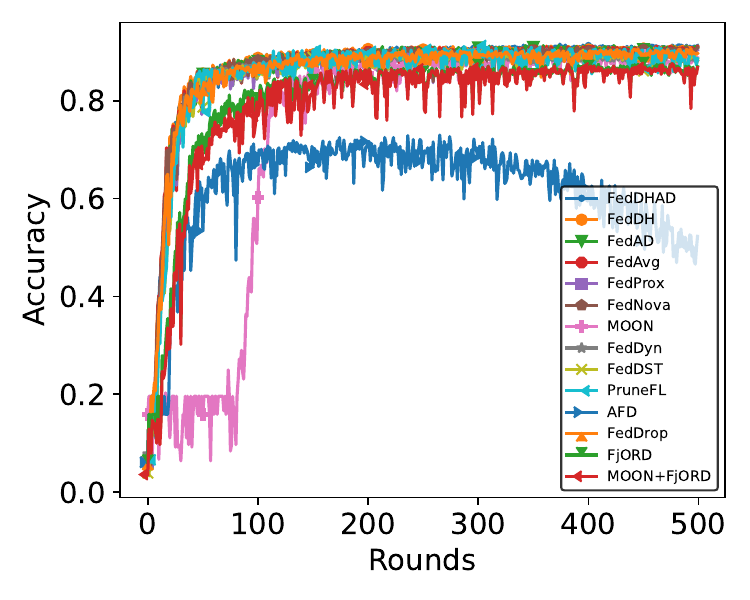}
\label{fig:acc_rounds_dhad_cnn_svhn}
}
\vspace{-1mm}
\subfigure[VGG \& SVHN]{
\includegraphics[width=0.22\linewidth]{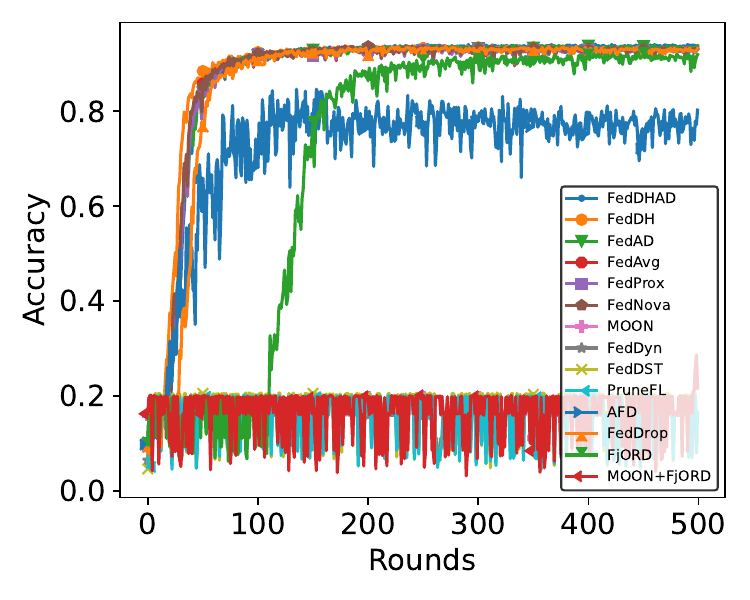}
\label{fig:acc_rounds_dhad_vgg_svhn}
}
\vspace{-1mm}
\subfigure[ResNet \& TinyImageNet]{
\includegraphics[width=0.22\linewidth]{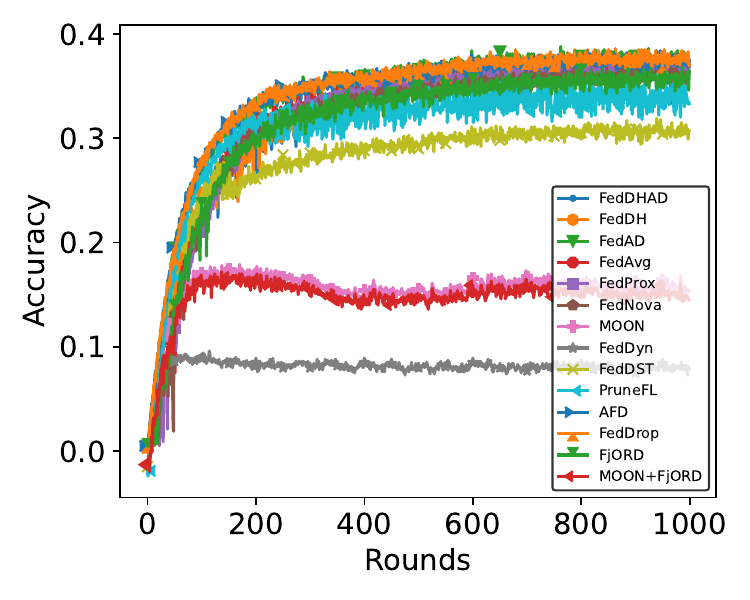}
\label{fig:acc_rounds_dhad_resnet_tinyimagenet}
}
\caption{The accuracy and training rounds with \TheName{} and various baseline methods for LeNet and CNN with CIFAR-10 and CIFAR-100, LeNet, CNN, and VGG with SVHN and ResNet with TinyImageNet.}
\vspace{-6mm}
\label{fig:acc_rounds_dhad}
\end{figure*}

\subsection{Evaluation with various Environments}

In this section, we give our experimental results performed with various environments
along three dimensions: network bandwidth, statistical data heterogeneity and scalability in number of devices.

\subsubsection{Network Bandwidth}

While devices have limited network connection, we analyze the performance of \TheName{} with different bandwidths. As shown in Figure \ref{fig: network_impact}, when the bandwidth becomes modest, \TheName{} corresponds to significantly higher training speed (up to 64.9\%) compared with baseline approaches. We observe \TheName{} corresponds to higher accuracy (up to 10.7\%) compared with baseline approaches with the modest bandwidth as well. When the network connection becomes worse, the advantages of \TheName{} become significant (from 64.9\% to 82.7\%) because of communication reduction brought by the adaptive dropout.

\begin{figure*}[t]
\centering
\subfigure[Time to target accuracy (0.54) under various bandwidths. ``Com.'' represents "Communication".]{
\includegraphics[width=0.3\linewidth]{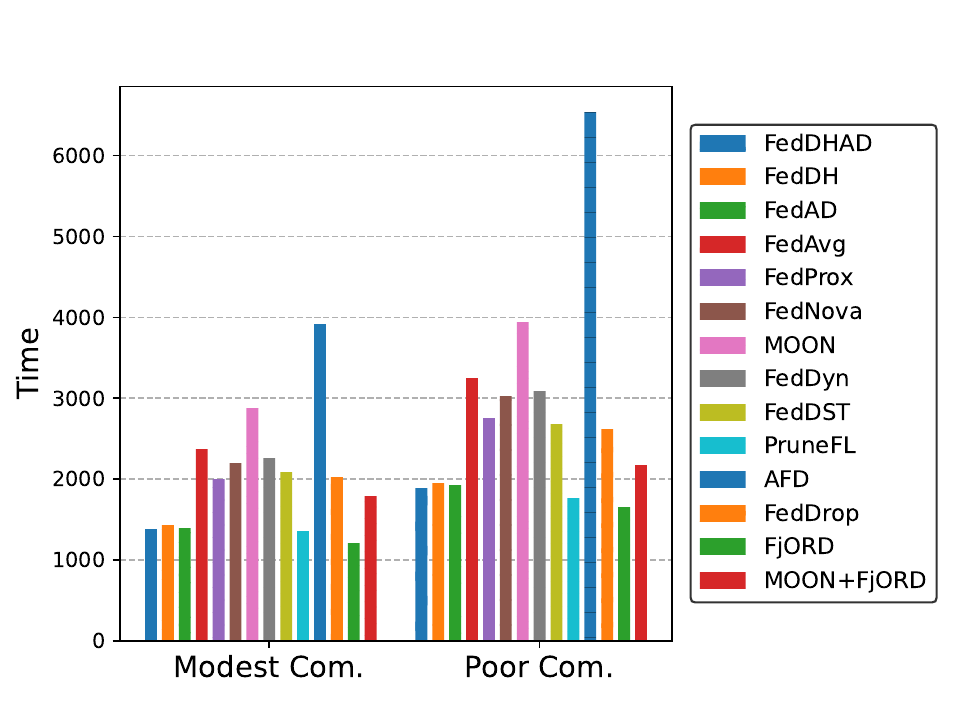}
\label{fig: network_impact}
}
\subfigure[The accuracy under various device heterogeneity. ``Het.'' represents "Heterogeneity".]{
\includegraphics[width=0.3\linewidth]{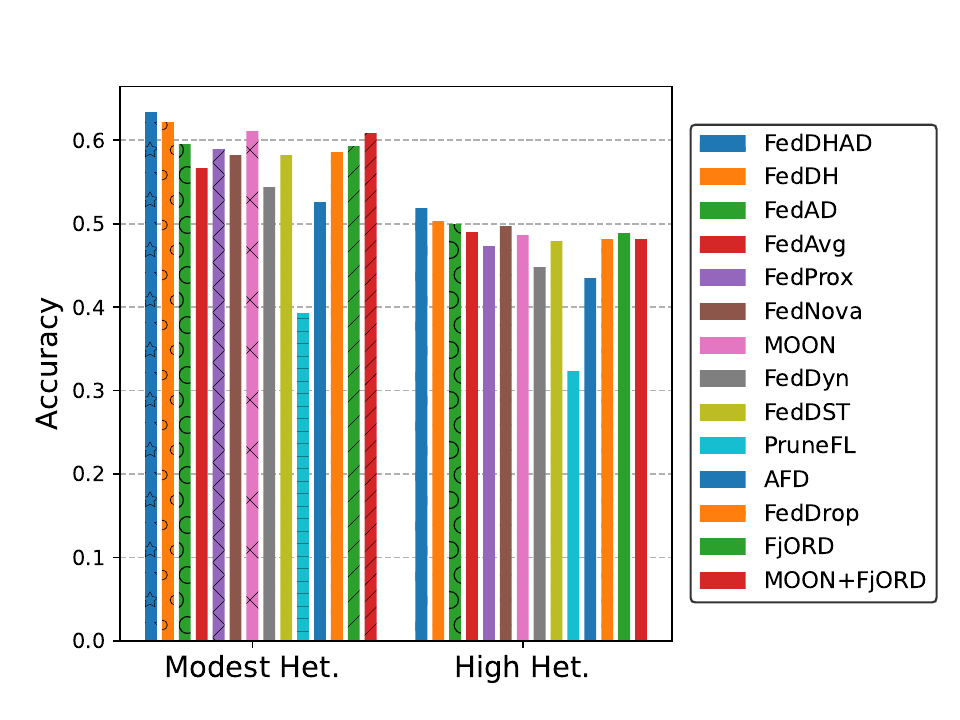}
\label{fig: heterogeneity_impact}
}
\vspace{-2mm}
\subfigure[The accuracy with different numbers of devices.]{
\includegraphics[width=0.3\linewidth]{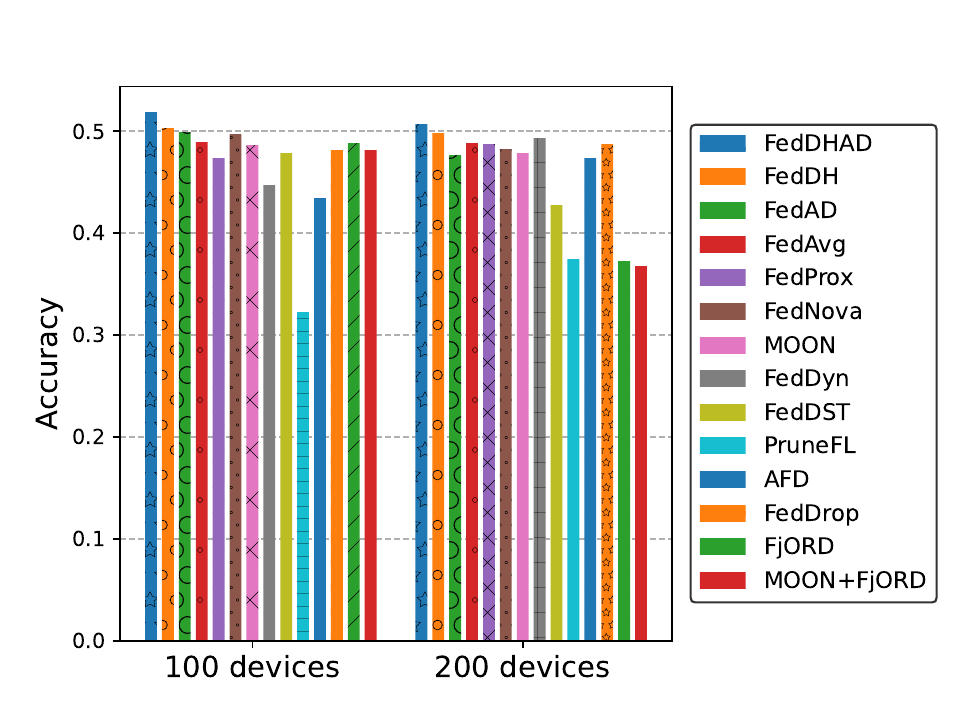}
\label{fig: device_impact}
}
\vspace{-2mm}
\caption{Performance of \TheName{} and various baseline methods with LeNet on CIFAR-10.}
\vspace{-5mm}
\label{fig:various_env}
\end{figure*}

\subsubsection{Statistical Data Heterogeneity}

As shown in Figure \ref{fig: heterogeneity_impact}, when the device data is of high heterogeneity, i.e., high non-IID degree data, \TheName{} significantly outperforms other baseline approaches in terms of accuracy (up to 6.9\%). In addition, we find \TheName{} leads to a shorter training time (up to 52.2\%) to achieve the target accuracy compared with baseline approaches with the data of high non-IID degree, because of dynamic adjustment of control parameters and adaptive dropout operations. When the device data are highly heterogeneous, the advantage of \TheName{} corresponds to higher accuracy becomes obvious (from 6.9\% to 10.7\%). As it can exploit the non-IID degrees of the distributed data on each device to adjust the weights of multiple models, \TheName{} achieves superior accuracy with the data of high non-IID degrees.

\subsubsection{Scalability}

To show the scalability of \TheName{}, we carry out experiments with a large number (200) of devices with LeNet on CIFAR-10. The results are shown in Figure \ref{fig: device_impact}, \TheName{} significantly outperforms other baseline approaches in terms of accuracy (up to 13.5\%). In addition, we find \TheName{} incurs a shorter training time (up to 8.6\%) to achieve target accuracy with 200 devices compared with that of 100. Furthermore, we study the effect of the number of devices on multiple approaches as shown in Figure \ref{fig: device_impact}. As the number of devices increases, the amount of local data decreases. We observe that the accuracy of baseline approaches decreases significantly when increasing the number of devices (up to 11.6\%). However, The advantages of \TheName{} become significant on a large-scale setting with small data in the device.

\subsection{Evaluation of Single Methods}
In this section, we further evaluate the single methods that contribute to the superiority of our approach. 
We first present the evaluation of our data distribution estimation method. 
Then, we give a comparison between \TheHetName{} and a static method. Afterward, we analyze the computation cost of \TheHetName{} (to update $Q$ in Formula \ref{eq:bilevel}) on the server. Finally, we provide the ablation study corresponding to the contribution of \TheHetName{} and \TheDropoutName{} to \TheName{}.

\subsubsection{Data Distribution Estimation}
We carried out an experiment to verify the difference between \TheHetName{} and the data distribution estimation method without transferring meta information, i.e., \TheHetNameE{}. 
As shown in Figures \ref{fig:cmp_dhe_lenet_10} and \ref{fig:cmp_dhe_cnn_10}, we find that \TheHetName{} and \TheHetNameE{} lead to similar accuracy (with the difference less than 0.8\% for both LeNet and CNN). Thus, when the meta information is not allowed to be transferred from devices to the server for high privacy requirements, we can exploit \TheHetNameE{} instead of \TheHetName{}.

\subsubsection{Comparison with Static JS Divergence}
To show that static JS divergence (FedJS) cannot well represent the non-IID degree, we carried out an experiment with LeNet and CNN on CIFAR-10, which reveals the difference between \TheHetName{} and FedJS. Figures \ref{fig:cmp_js_lenet_10} and \ref{fig:cmp_js_cnn_10} show that \TheHetName{} significantly outperforms FedJS in terms of accuracy (up to 2.9\% for LetNet and 0.9\% for CNN).

\subsubsection{Server Computation Overhead}
As \TheName{} involves server computation, we quantify its overhead in terms of time. As shown in Table \ref{tab:server_burden},
the overheads of \TheHetName{} are up to 10.4\%, which is quite small. 
In addition, the overheads of \TheName{} are up to 15.6\%, which corresponds to higher accuracy, shorter time to achieve target accuracy, and smaller computation costs.

\begin{figure*}[t]
\centering
\subfigure[LeNet CIFAR-10]{
\includegraphics[width=0.22\linewidth]{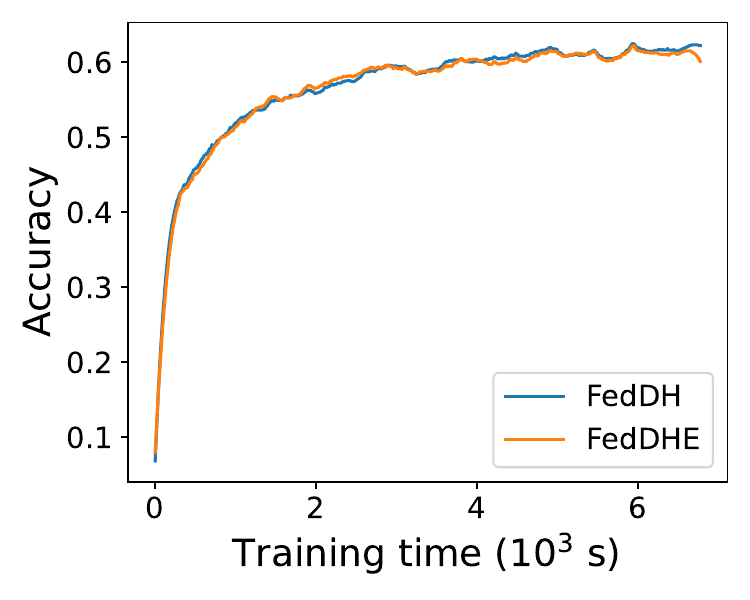}
\label{fig:cmp_dhe_lenet_10}
}
\vspace{-1mm}
\subfigure[CNN CIFAR-10]{
\includegraphics[width=0.22\linewidth]{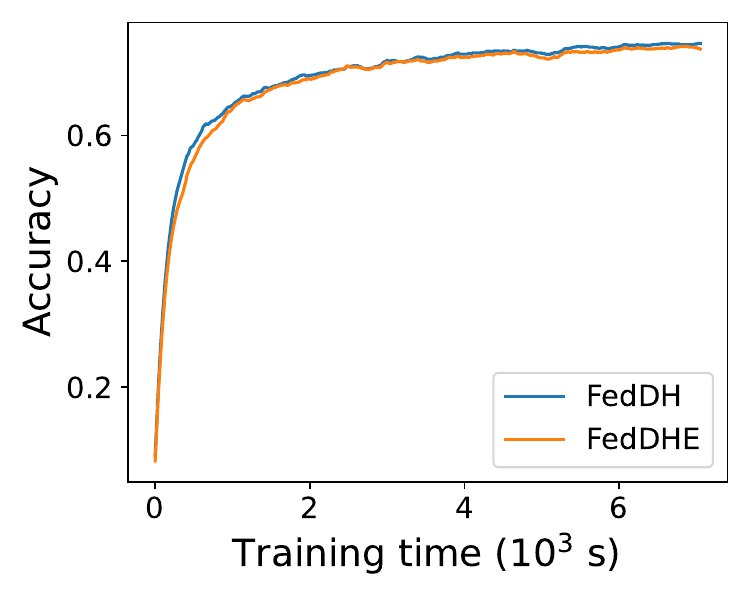}
\label{fig:cmp_dhe_cnn_10}
}
\vspace{-1mm}
\subfigure[LeNet CIFAR-10]{
\includegraphics[width=0.22\linewidth]{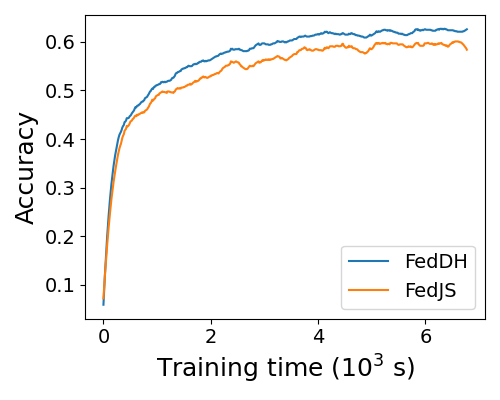}
\label{fig:cmp_js_lenet_10}
}
\vspace{-1mm}
\subfigure[CNN CIFAR-10]{
\includegraphics[width=0.22\linewidth]{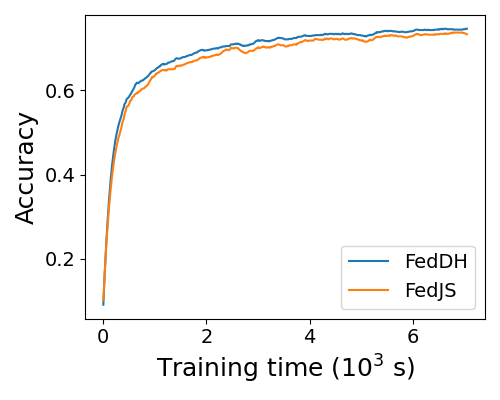}
\label{fig:cmp_js_cnn_10}
}
\caption{The accuracy and training time with \TheHetName{}, \TheHetNameE{}, and FedJS for LeNet and CNN on CIFAR-10.}
\vspace{-7mm}
\label{fig:cmp_js}
\end{figure*}

\subsubsection{Ablation Study}
In this section, we present the advantages of \TheHetName{} and \TheHetName{}, corresponding to the specific contribution to \TheName{}.
As shown in Tables \ref{tab:cmp_DHAD_10} and \ref{tab:cmp_DHAD_100}, \TheName{} can achieve higher accuracy and shorter training time to achieve the target accuracy compared with \TheHetName{} (up to 1.1\% in terms of accuracy and 14.7\% in terms of training time) and \TheDropoutName{} (up to 3.8\% in terms of accuracy and 28.7\% in terms of training time). As \TheName{} carries out the dynamic adjustment of the weights and the adaptive dropout operations at the same time, it leads to the shortest training time (up to 26.8\%) to achieve target accuracy (except for TinyImageNet with ResNet) and corresponds to the same computation cost or slightly higher (up to 0.5\%) for CIFAR-10, CIFAR-100, and SVHN, compared with \TheDropoutName{}. \TheHetName{} corresponds to higher accuracy compared with \TheDropoutName{} (up to 2.6\%) and \TheDropoutName{} incurs smaller computation costs (up to 15.4\%). \TheName{} corresponds to slightly higher computation costs compared with \TheDropoutName{} (up to 0.5\%) because of \TheHetName{}. With ResNet and TinyImageNet, \TheName{} corresponds to the highest accuracy while \TheDropoutName{} incurs the shortest time to achieve the target accuracy and computational costs. When tight training budget is required and big models are exploited, \TheDropoutName{} can be exploited. Otherwise, \TheName{} can be selected. 

\section{Conclusion}
\label{sec:conclusion}

In this paper, we proposed \TheName{}, an FL framework that simultaneously addresses statistical data heterogeneity and device heterogeneity. 
We proposed two novel methods: Dynamic Heterogeneous model aggregation (\TheHetName{}) and Adaptive Dropout (\TheDropoutName{}).
\TheHetName{} exploits the non-IID degrees of devices to dynamically adjust the weights of each local model within the model aggregation process.
\TheDropoutName{} adapts the dropout operation to each device while reducing communication and computation costs and achieving a good balance between efficiency and effectiveness.
By combining these two methods, \TheName{} is able to adaptively calculate the dropout rates in each round with good  load balancing, thus addressing the training efficiency problem, and dynamically adjusts the weights of the models with removed neurons, thus improving model accuracy.

The extensive experimental results reveal that \TheName{} significantly outperforms the state-of-the-art baseline approaches in terms of accuracy (up to 6.7\% higher), efficiency (up to 2.02 times faster), and computation cost (up to 15.0\% smaller). The experimental evaluation demonstrates as well that \TheName{} has significant advantages in various environments, e.g., modest network (\TheName{} is robust to bandwidth), high non-IID degree (\TheName{} is effective with heterogeneous data) and large number of devices (\TheName{} has excellent scalability). 

\ma{
Large Language Models (LLMs) have shown great potential in numerous tasks due to their large-scale pre-trained architectures. The training process of LLMs typically involves updating a large number of parameters \cite{liu2024fisher,che2023federated}, limiting the use of FL techniques in real-world scenarios. Future directions could explore using \TheName{} principles, focusing on efficient model updates and adaptive dropout tailored for LLMs. Additionally, future directions could explore improving the adaptivity of the \TheName{} framework by incorporating fine-grained control over dropout and model aggregation weights for extreme heterogeneity cases. By integrating personalized federated learning approaches, we can further ensure individual client needs are addressed, enhancing model performance across diverse data distributions. In addition, \TheName{} relies on central server for model aggregation. Future directions could extend to decentralized federated learning architectures, where model aggregation is performed collaboratively by participating devices. This shift could enhance robustness, reduce communication latency, and increase the resilience of system to server-side attacks or failures. By addressing these future directions, we aim to further the capabilities and applicability of federated learning, paving the way for more efficient, decentralized, and secure machine learning systems at scale.
}


\section*{Appendix}

In this section, we present the theoretical convergence proof, the correlation between JS divergence and non-IID degree, and the limitations of \TheName{}.

\subsection*{Convergence Proof}

We first introduce the assumptions and then present the convergence theorems with an upper bound.

\begin{proof}
First we have
\vspace{-2mm}
\begin{align*}
\vspace{-2mm}
    \parallel \bar{w}_{t,h+1} - {w}^* \parallel 
    = &\underbrace{\parallel \bar{w}_{t,h} - \eta_{t,h} \bar{g}_{t,h} - {w}^* \parallel ^2}_{A_1} + \underbrace{\parallel \eta_{t,h} \bar{g}_{t,h} - \eta_{t,h} g_{t, h} \parallel ^2}_{A_2} \\
    &+ 2\underbrace{\langle\bar{w}_{t,h} - \eta_{t,h} \bar{g}_{t, h} - {w}^*, \eta_{t,h} \bar{g}_{t, h} - \eta_{t,h} g_{t, h}\rangle}_{A_3}
\vspace{-4mm}
\end{align*}
Note that $E[A_3] = 0$ when devices are with unbiased sampling, bound $A_1$:
\vspace{-2mm}
\begin{align*}
\vspace{-4mm}
    A_1 = \parallel \bar{w}_{t,h} - {w}^* \parallel ^2 + \underbrace{\parallel \eta_{t,h} \bar{g}_{t,h} \parallel^2}_{B_1} - \underbrace{2\eta_{t,h} \langle\bar{w}_{t,h} - {w}^*, \bar{g}_{t, h} \rangle}_{B_2}
\vspace{-2mm}
\end{align*}
By the convexity of $\parallel \cdot \parallel^2$ (Jensen's inequality) and L-smoothness of $F_k$, 
\begin{align*}
    B_1 
    \leq 2L{\eta_{t,h}}^2 \sum_{k \in S_t}p_{k,t} (F_k(w^{k}_{t,h}) - E[F_k({w_k}^*)]) 
\end{align*}
Next, we bound $B_2$:
\vspace{-2mm}
\begin{align*}
\vspace{-4mm}
    B_2 
    = &-2\eta_{t,h} \sum_{k \in S_t}p_{k,t} \langle \bar{w}_{t,h} - w^{k}_{t,h}, \nabla F_k(w^{k}_{t,h}) \rangle \\
    &-2\eta_{t,h} \sum_{k \in S_t}p_{k,t} \langle w^{k}_{t,h} - {w}^*, \nabla F_k(w^{k}_{t,h}) \rangle
\end{align*}
By Cauchy-Schwarz inequality and AM-GM inequality, we have
\vspace{-2mm}
\begin{align*}
\vspace{-2mm}
    &-2\langle \bar{w}_{t,h} - w^{k}_{t,h}, \nabla F_k(w^{k}_{t,h}) \rangle \\
    \leq~&\frac{1}{\eta_{t,h}}\parallel\bar{w}_{t,h} - w^{k}_{t,h}\parallel^2 + \eta_{t,h} \parallel \nabla F_k(w^{k}_{t,h}) \parallel^2
\end{align*}
By $\mu$-convexity of $F_k$, we have,
\vspace{-2mm}
\begin{align*}
\vspace{-2mm}
    &-\langle w^{k}_{t,h} - {w}^*, \nabla F_k(w^{k}_{t,h}) \rangle \\
    \le &- (F_k(w^{k}_{t,h}) - F_k({w}^*)) - \frac{\mu}{2} \parallel \bar{w}_{t,h} - {w}^* \parallel^2
\end{align*}
plug $B_1, B_2$ into $A_1$:
\vspace{-2mm}
\begin{align*}
\vspace{-2mm}
    A_1 
    \leq~&(1-\mu \eta_{t,h})\parallel \bar{w}_{t,h} - {w}^* \parallel ^2 + \sum_{k \in S_t} p_{k,t} \parallel \bar{w}_{t,h} - w^{k}_{t,h} \parallel^2 \\
    &+\underbrace{4L{\eta_{t,h}}^2 \sum_{k \in S_t} p_{k,t} \bigl(F_k(w^{k}_{t,h}) - E[F_k({w_k}^*)]\bigr)}_{C_1} \\
    &- \underbrace{2\eta_{t,h} \sum_{k \in S_t} p_{k,t} \bigl(F_k(w^{k}_{t,h}) - F_k({w}^*)\bigr)}_{C_2} 
\end{align*}
where the last inequality results from $L$-smoothness of $F_k$. Next, we quantify the degree of non-iid:
\vspace{-2mm}
\begin{equation}
\label{eq:non-iid-def}
    \Gamma_{k,t} = E[F_k({w}^*)] - E[F_k({w_k}^*)]
\end{equation}
where $k \in S_t $.
\vspace{-2mm}
\begin{align*}
\vspace{-4mm}
    C_1 - C_2 
    \leq~&4L{\eta_{t,h}}^2 \sum_{k \in S_t} p_{k,t} \Gamma_{k,t} \\
    &- 2\eta_{t,h}(1-2L\eta_{t,h})\underbrace{\sum_{k \in S_t} p_{k,t}(F_k(w^{k}_{t,h}) - F_k({w}^*))}_{D}
\end{align*}
We define $\gamma = 2\eta_{t,h}(1-2L\eta_{t,h})$. If $\eta_{t,h} \le \frac{1}{4L}, \eta_{t,h} \le \gamma \le 2\eta_{t,h}$.
\vspace{-2mm}
\begin{align*}
\vspace{-2mm}
    D 
    \ge~&-\sum_{k \in S_t} p_{k,t} \bigl[\eta_{t,h} L (F_k(\bar{w}_{t, h}) - E[F_k({w_k}^*)]) \\
    &+ \frac{1}{2\eta_{t,h}} \parallel w^{k}_{t,h} - \bar{w}_{t, h} \parallel ^2\bigr]\\
    &+ \sum_{k \in S_t} p_{k,t} (F_k(\bar{w}_{t,h}) - F_k({w}^*))
\end{align*}
where the first inequality results from the convexity of $F_k(\cdot)$, the second and last inequality results from AM-GM inequality and L-smoothness. Next, we plug into $C_1-C_2$
\vspace{-1mm}
\begin{align*}
\vspace{-2mm}
    C_1-C_2 
    \le 6L{\eta_{t,h}}^2 \sum_{k \in S_t} p_{k,t} \Gamma_{k,t} + \sum_{k \in S_t} p_{k,t} \parallel w^{k}_{t,h} - \bar{w}_{t, h} \parallel ^2
\end{align*}

The inequality results from $(1) \sum_{k \in S_t} p_{k,t} (F_k(\bar{w}_{t,h}) - F_k({w}^*)) \ge 0$ and $\eta_{t,h} L - 1 \le{-\frac{3}{4}} < 0$; $(2) \Gamma_{k,t} \ge 0$ and $4L{\eta_{t,h}}^2 + \gamma \eta_{t,h} L \le{6{\eta_{t,h}}^2L}$; $(3) \frac{\gamma}{2\eta_{t,h}} \le{ 1}.$
Plug $C_1-C_2$ into $A_1$,
\vspace{-2mm}
\begin{align*} 
\vspace{-2mm}
    A_1 \le &~(1 - \mu \eta_{t,h}) \parallel \bar{w}_{t,h} - {w}^* \parallel^2 + 6L{\eta_{t,h}}^2 \sum_{k \in S_t} p_{k,t} \Gamma_{k,t} \notag \\ 
    &+ 2\sum_{k \in S_t} p_{k,t} \parallel w^{k}_{t,h} - \bar{w}_{t, h} \parallel ^2
\end{align*}
in which we have,
\vspace{-3mm}
\begin{align} 
\vspace{-2mm}
\label{eq:divergence of w}
    &E\sum_{k \in S_t}p_{k,t}\parallel w^{k}_{t,h} - \bar{w}_{t, h} \parallel ^2 
    \leq~Q^2 (H-1)^2{\eta_{t, h}}^2 \sum_{k \in S_t} p_{k,t} \frac{G^2}{|\zeta^{k}_{t,h}|} 
\end{align}
We take $\eta_{t, h'} \leq \eta_{t, 0} = \eta_{t-1, H}$ with $0 \leq h' \leq H-1$ and $0 \leq h \leq H-1$. In the third inequality, we use the assumption $E\parallel f_k(w, x_{k,j}) \parallel^2 \le{G^2}$. In the last inequality, we assume that $\eta_{t, 0} \leq Q \eta_{t, h'} $. Therefore, we bound $A_1$ as:
\vspace{-2mm}
\begin{align} \label{eq:A1}
\vspace{-4mm}
    E[A_1] \le &~(1 - \mu \eta_{t,h}) \parallel \bar{w}_{t,h} - {w}^* \parallel^2 +  6L{\eta_{t,h}}^2 \sum_{k \in S_t} p_{k,t} \Gamma_{k,t} \notag \\ 
    &+ 2Q^2 (H-1)^2{\eta_{t,h}}^2 \sum_{k \in S_t} p_{k,t} \frac{G^2}{|\zeta^{k}_{t,h}|} 
\end{align}
Assume that the loss function $f$ satisfies $E\parallel \nabla f_k(w^{k}_{t,h}, x_{k,j}) - \nabla F_k(w^{k}_{t,h})\parallel ^2 \le{\sigma^2}, \forall t, h$. Inspired by \cite{dekel2012optimal}, we bound $A_2$ as,
\vspace{-4mm}
\begin{align} 
\vspace{-2mm}
\label{eq:A2}
        E[A_2] 
        \le~{\eta_{t,h}}^2 \sum_{k \in S_t} {p_{k,t}} \frac{\sigma^2}{|\zeta^{k}_{t,h}|}
\end{align}
In the inequality, we use the Jensen inequality. \\

From Formulas \ref{eq:divergence of w}, \ref{eq:A1}, and \ref{eq:A2}, we have,
\vspace{-2mm}
\begin{align*}
\vspace{-2mm}
    &E\parallel \bar{w}_{t,h+1} - {w}^* \parallel \\
    \le~&(1 - \mu \eta_{t,h}) E \parallel \bar{w}_{t,h} - {w}^* \parallel^2 +  6L{\eta_{t,h}}^2 \sum_{k \in S_t} p_{k,t} \Gamma_{k,t} \notag \\
    &+ 2Q^2 (H-1)^2{\eta_{t,h}}^2 \sum_{k \in S_t} p_{k,t} \frac{G^2}{|\zeta^{k}_{t,h}|} + {\eta_{t,h}}^2 \sum_{k \in S_t} p_{k,t} \frac{\sigma^2}{|\zeta^{k}_{t,h}|} 
\end{align*}
We use the same batch size, i.e., $|\zeta^{k}_{t,h}| = b$, and then, we have:
\vspace{-2mm}
\begin{align*}
    &E\parallel \bar{w}_{t,h} - {w}^* \parallel \\
    =~&(1 - \mu \eta_{t,h}) E \parallel \bar{w}_{t,h} - {w}^* \parallel^2 \notag \\
    &+ {\eta_{t,h}}^2 ( 6L \sum_{k \in S_t} p_{k,t} \Gamma_{k,t} \notag 
    + 2Q^2 (H-1)^2 \frac{G^2}{b} + \frac{\sigma^2}{b}) \notag 
\end{align*}
In addition, we exploit $\Gamma$ to represent $\max_{t\in\{1,...,T\}}\sum_{k \in S_t} p_{k,t} \Gamma_{k,t}$, and we have:
\vspace{-2mm}
\begin{align*}
    E\parallel \bar{w}_{t,h} - {w}^* \parallel 
    \leq~&(1 - \mu \eta_{t,h}) E \parallel \bar{w}_{t,h} - {w}^* \parallel^2 \\
    &+ {\eta_{t,h}}^2 ( 6L \Gamma \notag 
    + 2Q^2 (H-1)^2 \frac{G^2}{b} + \frac{\sigma^2}{b}) \notag
\end{align*}
Let $B = 6L \Gamma \notag + 2Q^2 (H-1)^2 \frac{G^2}{b} + \frac{\sigma^2}{b} \notag$, which is a constant value, and we have:
\vspace{-2mm}
\begin{align*}
    E\parallel \bar{w}_{t,h} - {w}^* \parallel &= (1 - \mu \eta_{t,h}) E \parallel \bar{w}_{t,h} - {w}^* \parallel^2 +  {\eta_{t,h}}^2 B \notag 
\end{align*}
As $1 - \mu \eta_{t,h} \geq 0$, we have $\eta_{t,h} \leq \frac{1}{\mu}$. Within the inequality of $D$, we have $\eta_{t,h} \leq \frac{1}{4L}$. Thus, $\eta_{t,h} \leq $min$\{\frac{1}{\mu}, \frac{1}{4L}\}$ for any $t$. Let us assume that $\eta_{t,h} = \frac{\beta}{(t-1)H + h + \gamma}$, with $\beta > \frac{1}{\mu}$ and $\gamma > 0$. We take $\upsilon \geq (\gamma + 1)\parallel \bar{w}_{1,1} - {w}^* \parallel$, and we have $\parallel \bar{w}_{1,1} - {w}^* \parallel \leq \frac{\upsilon}{\gamma + 1}$. Then, when $\upsilon = \frac{\beta^{m2}B}{\beta\mu - 1}$, we have $\parallel \bar{w}_{t,h} - {w}^* \parallel \leq \frac{\upsilon}{(t-1)H + h +\gamma}$ as:
\vspace{-4mm}
\begin{align*}
    \parallel \bar{w}_{t,h + 1} - {w}^* \parallel 
    \leq~\frac{\upsilon}{(t-1)H + h + \gamma} \notag
\end{align*}
By $L$-smoothness of $F$, $E\parallel F(w_{t,h}) - F({w}^*)\parallel \le \frac{L}{2}E\parallel \bar{w}_{t,h}-{w}^* \parallel^2$, we have Theorem 1:
\vspace{-2mm}
\begin{align*} 
    E\parallel F(w_{t,h}) - F({w}^*) \parallel \leq \frac{L}{2}\frac{\upsilon}{(t-1)H + h + \gamma}
\end{align*}
\end{proof}

\begin{proof}
Based on Theorem \ref{eq:eqTheorem1}, when $\upsilon = $max$\{(\gamma + 1)\parallel \bar{w}_{1,1} - {w}^* \parallel, \frac{\beta^{m2}B}{\beta\mu - 1}\}$, we have $E\parallel F(w_{T,H}) - F({w}^*) \parallel \leq \frac{L}{2}\frac{\upsilon}{(t-1)H + h + \gamma}$. When $(\gamma + 1)\parallel \bar{w}_{1,1} - {w}^* \parallel \leq \frac{\beta^{m2}B}{\beta\mu - 1}$, $\upsilon = \frac{\beta^{m2}B}{\beta\mu - 1}$. As $B$ augments with $\sum_{k \in S_t} p_{k,t} \Gamma_{k,t}$, when $\sum_{k \in S_t} p_{k,t} \Gamma_{k,t}$ is smaller, the upper bound of $E\parallel F(w_{T,H}) - F({w}^*) \parallel$ is smaller. 

We denote $\Gamma^*_t = \sum_{k \in S_t} p_{k,t} \Gamma_{k,t}$ when $p_{k,t} = \frac{\frac{n_k}{\Gamma_{k,t}}}{\sum_{k'\in S_t}\frac{n_k'}{\Gamma_{k',t}}}$ and $\Gamma_t = \sum_{k \in S_t} p_{k,t} \Gamma_{k,t}$ when $p_{k,t} = \frac{n_k}{\sum_{k'\in S_t}n_k'} $. Then, we have:
\begin{align*} \label{eq:upperBound}
    \Gamma^*_t - \Gamma_t 
    =~\frac{\sum_{k \in S_t}n_k (\sum_{k'\in S_t}n_k' - \Gamma_{k,t} \sum_{k'\in S_t}\frac{n_k'}{\Gamma_{k',t}} ) }{(\sum_{k'\in S_t}\frac{n_k'}{\Gamma_{k',t}}) (\sum_{k'\in S_t}n_k')}
\end{align*}
Let us denote $\delta = \sum_{k \in S_t}n_k (\sum_{k'\in S_t}n_k' - \Gamma_{k,t} \sum_{k'\in S_t}\frac{n_k'}{\Gamma_{k',t}})$, and we have:
\begin{align*} 
\delta = \sum_{k \in S_t}\sum_{k'\in S_t}\frac{n_k n_k' (\Gamma_{k',t} - \Gamma_{k,t} )}{\Gamma_{k',t}}.
\end{align*}
When $k = k'$, we have $\Gamma_{k',t} - \Gamma_{k,t} = 0$. When $k \neq k'$, we have:
\begin{align*} 
\frac{n_k n_k' (\Gamma_{k',t} - \Gamma_{k,t} )}{\Gamma_{k',t}} + \frac{n_k' n_k (\Gamma_{k,t} - \Gamma_{k',t} )}{\Gamma_{k,t}} 
=~&-\frac{n_k n_k'}{\Gamma_{k',t} \Gamma_{k,t}}(\Gamma_{k,t} - \Gamma_{k',t})^2 \\
\end{align*}
Thus, we have $\delta \leq 0$ and $\Gamma^*_t - \Gamma_t \leq 0$
\end{proof}

\ma{
\subsection*{Correlation Analysis}

In this study, we investigated the correlation between JS divergence and non-IID degree. We have evidenced the significance of the positive correlation between JS divergence and non-IID degree, supporting our hypothesis that higher JS divergence is associated with increased non-IID degree.
\begin{figure}[!t]
\centering
\includegraphics[width=0.45\linewidth]{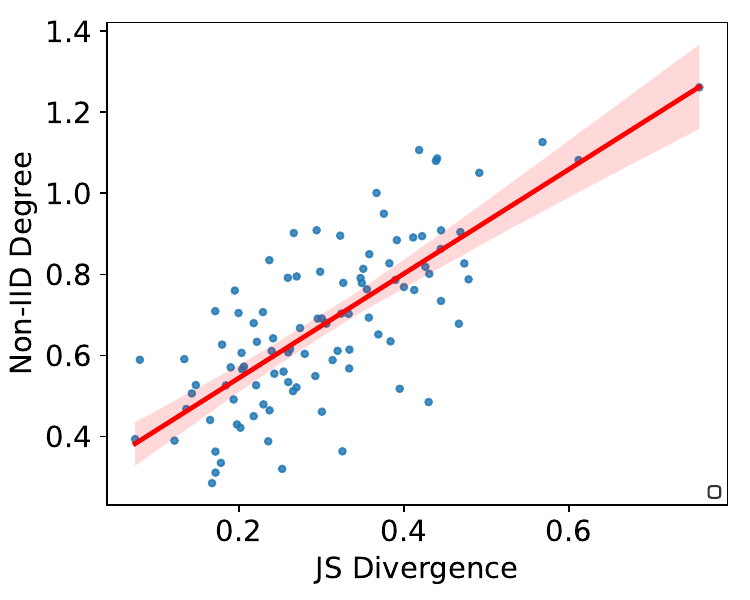}
\vspace{-4mm}
\caption{Significant positive correlations have been found between JS divergence and non-IID degree}
\label{fig:correlation_analysis}
\vspace{-6mm}
\end{figure}

\begin{table}[!t]
\centering
\caption{Performance comparison on CIFAR-10 and CIFAR-100 datasets with CNN model. }
\label{tab:cmp_new}
\begin{tabular}{c|cc|cc}
\toprule
\multirow{2}{*}{Method} & \multicolumn{2}{c|}{Cifar10} & \multicolumn{2}{c}{Cifar100} \\ 
\cmidrule(r){2-5} & Acc  & Time  & Acc  & Time \\
\midrule
\textbf{\TheName{}} (ours)& \textbf{0.749}    & \textbf{3074} & \textbf{0.423}  & \textbf{4952}\\ \hline
\textbf{\TheHetName{}} (ours)& 0.744    & 3322 & 0.420   & 5018\\ \hline
\textbf{\TheDropoutName{}} (ours)& 0.732    & 3632 & 0.412    & 5324\\ \hline
FedAS & 0.705 & 3211 & 0.415 & 5172\\ \hline
FedKTL & 0.684 & 6508 & 0.396 & 6593\\ \hline
AugFL & 0.735 & 3490 & 0.407 & 5328\\ \hline
\end{tabular}
\end{table}

Based on Formula \ref{eq:non-iid-def}, we quantify the degree of non-IID as the difference between the global and expected local losses. We exploit a linear transformation of JS divergence in Formula \ref{eq:non-iid-degree}. In order to analyze the correlation between two random variables, we calculate the Pearson correlation coefficient \cite{benesty2008importance}. This statistical measure quantifies the linear relationship between the variables, providing a value between -1 and 1. A value closer to 1 indicates a strong positive correlation, while a value closer to -1 indicates a strong negative correlation. We also assessed the statistical significance of the correlation using p-values. The analysis yields a Pearson correlation coefficient of $R^{**}=55.2\% (N = 100, p\text{-value} = 9.27 \times 10^{-19} < 0.0001$), which implies strong linear relationship between JS divergence and the non-IID degree. This strong positive correlation is visualized in Figure \ref{fig:correlation_analysis}.
}

\m{\subsection*{Comparitive Experiments}

We evaluate the performance of our proposed methods against three recent advancements as objects for comparative experiments (e.g. FedAS \cite{yang2024fedas}, FedKTL \cite{zhang2024upload}, and AugFL \cite{yue2025augfl}) using three key metrics: accuracy, training
time. Accuracy measures the proportion of correctly classified samples in the
test dataset using the trained global model. Training time is measured in seconds (s) and represents
the time required to achieve a target accuracy level, which varies by dataset. We set target accuracies of 0.72 for CIFAR-10, and 0.41 for CIFAR-100. As shown in Table \ref{tab:cmp_new}, \TheName{} consistently outperforms the three state-of-the-art, resulting in significantly higher accuracy (up to 1.93\%) and training time (up to 2.12 times shorter).}

\m{In addition, we conduct an ablation study to compare the combination of \TheName{} and AEDFL \cite{liu2024aedfl}. We consider a decentralized FL system with 100 devices with an exponential graph topology. The results are shown in Table \ref{tab:cmp_new}. The evaluation results demonstrate that the combination of AEDFL improves the accuracy by 1.9\%. Furthermore, it can further improve the efficiency of FedDHAD (up to 64.0\% ). In the future, we will investigate in the combination of FedDHAD and other approaches to achieve superior performance. }

\begin{table}[h]
\centering
\caption{Ablation study results. “Acc” represents
the convergence accuracy. “Time” refers to the training time (s) to achieve a target accuracy, i.e., 0.59 for LeNet with CIFAR10.}
\label{tab:cmp_new}
\begin{tabular}{c|cc}
\toprule
\multirow{2}{*}{Method} & \multicolumn{2}{c}{Cifar10 \& LeNet}\\ 
\cmidrule(r){2-3} & Acc  & Time  \\
\midrule
\textbf{\TheName{}} (ours)&  0.633  & 2446\\ \hline
\textbf{\TheHetName{}} (ours)&  0.621   & 2668\\ \hline
\textbf{\TheDropoutName{}} (ours)& 0.595    & 3942\\ \hline
\textbf{AEDFL+FedDHAD} & \textbf{0.645} & \textbf{880}\\ \hline
\end{tabular}
\end{table}

\ma{\subsection*{Limitations}

While \TheName{} can be exploited to efficiently train a model with distributed data, \TheName{} relies on central server for model aggregation. When the devices are connected using a decentralized topology without a central server, e.g., ring, it would be complicated to directly utilize \TheName{} for collaborative model training. In addition, \TheName{} is validated on traditional models, such as LeNet, CNN, further improvement is needed to adapt \TheName{} to other models, e.g., Transformers.}


\bibliographystyle{unsrt}4
\bibliography{ref}

\begin{thebibliography}{10}

\bibitem{mcmahan2021advances}
Peter Kairouz, H.~Brendan McMahan, Brendan Avent, Aurélien Bellet, and Mehdi~Bennis et~al.
\newblock Advances and open problems in federated learning.
\newblock {\em Foundations and Trends{\textregistered} in Machine Learning}, 14(1-2):1--210, 2021.

\bibitem{chen2024trustworthy}
Chunlu Chen, Ji~Liu, Haowen Tan, Xingjian Li, Kevin I-Kai Wang, Peng Li, Kouichi Sakurai, and Dejing Dou.
\newblock Trustworthy federated learning: Privacy, security, and beyond.
\newblock pages 1--32, 2024.

\bibitem{liu2024enhancing}
Ji~Liu, Chunlu Chen, Yu~Li, Lin Sun, Yulun Song, Jingbo Zhou, Bo~Jing, and Dejing Dou.
\newblock Enhancing trust and privacy in distributed networks: a comprehensive survey on blockchain-based federated learning.
\newblock {\em Knowledge and Information Systems}, pages 1--27, 2024.

\bibitem{liu2022distributed}
Ji~Liu, Jizhou Huang, Yang Zhou, Xuhong Li, Shilei Ji, Haoyi Xiong, and Dejing Dou.
\newblock From distributed machine learning to federated learning: a survey.
\newblock {\em Knowledge and Information Systems}, 64(4):885--917, 2022.

\bibitem{GDPR}
{Official Journal of the European Union}.
\newblock General data protection regulation.
\newblock \url{https://eur-lex.europa.eu/legal-content/EN/TXT/PDF/?uri=CELEX:32016R0679}, 2016.
\newblock Online; accessed 09/05/2022.

\bibitem{CCPA}
{Californians for Consumer Privacy}.
\newblock California consumer privacy act home page.
\newblock \url{https://www.caprivacy.org/}, 2020.
\newblock Online; accessed 09/05/2022.

\bibitem{mcmahan2017communication}
Brendan McMahan, Eider Moore, Daniel Ramage, Seth Hampson, and Blaise~Aguera y~Arcas.
\newblock Communication-efficient learning of deep networks from decentralized data.
\newblock In {\em Artificial Intelligence and Statistics ({AISTATS})}, pages 1273--1282, Fort Lauderdale, FL, USA, 2017. {PMLR}.

\bibitem{liu2023heterps}
Ji~Liu, Zhihua Wu, Danlei Feng, Minxu Zhang, Xinxuan Wu, Xuefeng Yao, Dianhai Yu, Yanjun Ma, Feng Zhao, and Dejing Dou.
\newblock Heterps: Distributed deep learning with reinforcement learning based scheduling in heterogeneous environments.
\newblock {\em Future Generation Computer Systems}, 148:106--117, 2023.

\bibitem{li2014scaling}
Mu~Li, David~G Andersen, Jun~Woo Park, Alexander~J Smola, Amr Ahmed, Vanja Josifovski, James Long, Eugene~J Shekita, and Bor-Yiing Su.
\newblock Scaling distributed machine learning with the parameter server.
\newblock In {\em {USENIX} Symposium on Operating Systems Design and Implementation ({OSDI})}, pages 583--598, 2014.

\bibitem{Wang2020Tackling}
Jianyu Wang, Qinghua Liu, Hao Liang, Gauri Joshi, and H.~Vincent Poor.
\newblock Tackling the objective inconsistency problem in heterogeneous federated optimization.
\newblock In {\em Advances in Neural Information Processing Systems ({NeurIPS})}, volume~33, pages 7611--7623, Virtual Event, 2020. Curran Associates.

\bibitem{liu2022multi}
Ji~Liu, Juncheng Jia, Beichen Ma, Chendi Zhou, Jingbo Zhou, Yang Zhou, Huaiyu Dai, and Dejing Dou.
\newblock Multi-job intelligent scheduling with cross-device federated learning.
\newblock {\em IEEE Transactions on Parallel and Distributed Systems}, 34(2):535--551, 2022.

\bibitem{liu2022Efficient}
Chendi Zhou, Ji~Liu, Juncheng Jia, Jingbo Zhou, Yang Zhou, Huaiyu Dai, and Dejing Dou.
\newblock Efficient device scheduling with multi-job federated learning.
\newblock {\em AAAI Conf. on Artificial Intelligence}, 36(9):9971--9979, 2022.

\bibitem{LiStudy2022}
Qinbin Li, Yiqun Diao, Quan Chen, and Bingsheng He.
\newblock Federated learning on non-iid data silos: An experimental study.
\newblock In {\em {IEEE} Int. Conf. on Data Engineering ({ICDE})}, pages 965--978, Kuala Lumpur, Malaysia, 2022. {IEEE}.

\bibitem{Li2020On}
Xiang Li, Kaixuan Huang, Wenhao Yang, Shusen Wang, and Zhihua Zhang.
\newblock On the convergence of fedavg on non-iid data.
\newblock In {\em Int. Conf. on Learning Representations ({ICLR})}, pages 1--26, Addis Ababa, Ethiopia, 16 Sep 2024. OpenReview.net.

\bibitem{Li2020FedProx}
Tian Li, Anit~Kumar Sahu, Manzil Zaheer, Maziar Sanjabi, Ameet Talwalkar, and Virginia Smith.
\newblock Federated optimization in heterogeneous networks.
\newblock In {\em Machine Learning and Systems ({MLSys})}, volume~2, pages 429--450, Austin, TX, USA, 2020. mlsys.org.

\bibitem{fuglede2004jensen}
Bent Fuglede and Flemming Topsoe.
\newblock Jensen-shannon divergence and {Hilbert} space embedding.
\newblock In {\em Int. Symposium on Information Theory ({ISIT})}, page~31, Chicago, IL, United States, 2004. {IEEE}.

\bibitem{kullback1997information}
Solomon Kullback.
\newblock {\em Information theory and statistics}.
\newblock Courier Corporation, United States, 1997.

\bibitem{Xie2021Multi}
Ming Xie, Guodong Long, Tao Shen, Tianyi Zhou, Xianzhi Wang, Jing Jiang, and Chengqi Zhang.
\newblock Multi-center federated learning.
\newblock {\em arXiv preprint arXiv:2108.08647}, 26(1):481--500, 2021.

\bibitem{Bonawitz2019Towards}
Keith Bonawitz, Hubert Eichner, Wolfgang Grieskamp, Dzmitry Huba, Alex Ingerman, Vladimir Ivanov, Chlo\'{e} Kiddon, Jakub Kone\v{c}n\'{y}, Stefano Mazzocchi, Brendan McMahan, Timon Van~Overveldt, David Petrou, Daniel Ramage, and Jason Roselander.
\newblock Towards federated learning at scale: System design.
\newblock In A.~Talwalkar, V.~Smith, and M.~Zaharia, editors, {\em Proceedings of Machine Learning and Systems}, volume~1, pages 374--388, Stanford, California, United States, 2019. mlsys.org.

\bibitem{Srivastava2014Dropout}
Nitish Srivastava, Geoffrey~E. Hinton, Alex Krizhevsky, Ilya Sutskever, and Ruslan Salakhutdinov.
\newblock Dropout: a simple way to prevent neural networks from overfitting.
\newblock {\em Journal of Machine Learning Research}, 15(1):1929--1958, 2014.

\bibitem{horvath2021fjord}
Samuel Horvath, Stefanos Laskaridis, Mario Almeida, Ilias Leontiadis, Stylianos Venieris, and Nicholas Lane.
\newblock Fjord: Fair and accurate federated learning under heterogeneous targets with ordered dropout.
\newblock In {\em Advances in Neural Information Processing Systems ({NeurIPS})}, volume~34, pages 12876--12889, Virtual Event, 2021. Curran Associates.

\bibitem{Bouacida2021FedDropout}
Nader Bouacida, Jiahui Hou, Hui Zang, and Xin Liu.
\newblock Adaptive federated dropout: Improving communication efficiency and generalization for federated learning.
\newblock In {\em IEEE Conf. on Computer Communications Workshops (INFOCOM WKSHPS)}, pages 1--6, Vancouver, BC, Canada, 2021. {IEEE}.

\bibitem{wen2022federated}
Dingzhu Wen, Ki-Jun Jeon, and Kaibin Huang.
\newblock Federated dropout--a simple approach for enabling federated learning on resource constrained devices.
\newblock {\em IEEE Wireless Communications Letters}, 11(5):923--927, 2022.

\bibitem{acar2021federated}
Durmus Alp~Emre Acar, Yue Zhao, Ramon Matas, Matthew Mattina, Paul Whatmough, and Venkatesh Saligrama.
\newblock Federated learning based on dynamic regularization.
\newblock In {\em Int. Conf. on Learning Representations ({ICLR})}, pages 1--36, Vienna, Austria, 2021. \url{OpenReview.net}.

\bibitem{li2021model}
Qinbin Li, Bingsheng He, and Dawn Song.
\newblock Model-contrastive federated learning.
\newblock In {\em IEEE/CVF Conf. on Computer Vision and Pattern Recognition ({CVPR})}, pages 10713--10722, Virtual Event, 2021. IEEE.

\bibitem{Karimireddy2020SCAFFOLD}
Sai~Praneeth Karimireddy, Satyen Kale, Mehryar Mohri, Sashank Reddi, Sebastian Stich, and Ananda~Theertha Suresh.
\newblock {SCAFFOLD}: Stochastic controlled averaging for federated learning.
\newblock In {\em Int. Conf. on Machine Learning ({ICML})}, volume 119, pages 5132--5143, Vienna, Austria, 2020. PMLR.

\bibitem{Sun2021PartialFed}
Benyuan Sun, Hongxing Huo, YI~YANG, and Bo~Bai.
\newblock Partialfed: Cross-domain personalized federated learning via partial initialization.
\newblock In {\em Advances in Neural Information Processing Systems ({NeurIPS})}, volume~34, pages 23309--23320, Virtual Event, 2021. Curran Associates.

\bibitem{park2021sageflow}
Jungwuk Park, Dong-Jun Han, Minseok Choi, and Jaekyun Moon.
\newblock Sageflow: Robust federated learning against both stragglers and adversaries.
\newblock In {\em Advances in Neural Information Processing Systems ({NeurIPS})}, volume~34, pages 840--851, Virtual Event, 2021. Curran Associates.

\bibitem{jia2024efficient}
Juncheng Jia, Ji~Liu, Chendi Zhou, Hao Tian, Mianxiong Dong, and Dejing Dou.
\newblock Efficient asynchronous federated learning with sparsification and quantization.
\newblock {\em Concurrency and Computation: Practice and Experience}, 36(9):e8002, 2024.

\bibitem{liu2024aedfl}
Ji~Liu, Tianshi Che, Yang Zhou, Ruoming Jin, Huaiyu Dai, Dejing Dou, and Patrick Valduriez.
\newblock Aedfl: efficient asynchronous decentralized federated learning with heterogeneous devices.
\newblock In {\em Proceedings of the 2024 SIAM International Conference on Data Mining (SDM)}, pages 833--841. SIAM, 2024.

\bibitem{jia2025efficient}
Juncheng Jia, Ji~Liu, Chao Huo, Yihui Shen, Yang Zhou, Huaiyu Dai, and Dejing Dou.
\newblock Efficient federated learning with timely update dissemination.
\newblock {\em Knowledge and Information Systems}, pages 1--38, 2025.

\bibitem{liu2024fedasmu}
Ji~Liu, Juncheng Jia, Tianshi Che, Chao Huo, Jiaxiang Ren, Yang Zhou, Huaiyu Dai, and Dejing Dou.
\newblock Fedasmu: Efficient asynchronous federated learning with dynamic staleness-aware model update.
\newblock In {\em Proceedings of the AAAI Conference on Artificial Intelligence}, volume~38, pages 13900--13908, 2024.

\bibitem{yang2019federated}
Qiang Yang, Yang Liu, Tianjian Chen, and Yongxin Tong.
\newblock Federated machine learning: Concept and applications.
\newblock {\em ACM Transactions on Intelligent Systems and Technology (TIST)}, 10(2):1--19, 2019.

\bibitem{yang2024exploring}
Mingzhao Yang, Shangchao Su, Bin Li, and Xiangyang Xue.
\newblock Exploring one-shot semi-supervised federated learning with pre-trained diffusion models.
\newblock In {\em Proceedings of the AAAI Conference on Artificial Intelligence}, volume~38, pages 16325--16333, 2024.

\bibitem{SalehkaleybarOneShot}
Saber Salehkaleybar, Arsalan Sharifnassab, and S.~Jamaloddin Golestani.
\newblock One-shot federated learning: Theoretical limits and algorithms to achieve them.
\newblock {\em Journal of Machine Learning Research}, 22(189):1--47, 2021.

\bibitem{eichner2019semi}
Hubert Eichner, Tomer Koren, Brendan McMahan, Nathan Srebro, and Kunal Talwar.
\newblock Semi-cyclic stochastic gradient descent.
\newblock In {\em Int. Conf. on Machine Learning ({ICML})}, pages 1764--1773, Long Beach, California, United States, 2019. PMLR.

\bibitem{casella2023benchmarking}
Bruno Casella, Roberto Esposito, Carlo Cavazzoni, and Marco Aldinucci.
\newblock Benchmarking fedavg and fedcurv for image classification tasks.
\newblock {\em arXiv preprint arXiv:2303.17942}, 2023.

\bibitem{chen2024fair}
Yuhang Chen, Wenke Huang, and Mang Ye.
\newblock Fair federated learning under domain skew with local consistency and domain diversity.
\newblock In {\em Proceedings of the IEEE/CVF Conference on Computer Vision and Pattern Recognition}, pages 12077--12086, 2024.

\bibitem{JinAccelerated2022}
Jiayin Jin, Jiaxiang Ren, Yang Zhou, Lingjuan Lv, Ji~Liu, and Dejing Dou.
\newblock Accelerated federated learning with decoupled adaptive optimization.
\newblock In {\em Int. Conf. on Machine Learning ({ICML})}, volume 162, pages 10298--10322, Baltimore, Maryland, United States, 2022. PMLR.

\bibitem{Khodak2019Adaptive}
Mikhail Khodak, Maria-Florina~F Balcan, and Ameet~S Talwalkar.
\newblock Adaptive gradient-based meta-learning methods.
\newblock In {\em Advances in Neural Information Processing Systems ({NeurIPS})}, volume~32, pages 1--12, Vancouver, BC, Canada, 2019. Curran Associates.

\bibitem{Smith2017Federated}
Virginia Smith, Chao-Kai Chiang, Maziar Sanjabi, and Ameet~S Talwalkar.
\newblock Federated multi-task learning.
\newblock In {\em Advances in Neural Information Processing Systems ({NeurIPS})}, volume~30, pages 1--11, Long Beach, CA, United States, 2017. Curran Associates.

\bibitem{li2023adversarial}
Qinbin Li, Bingsheng He, and Dawn Song.
\newblock Adversarial collaborative learning on non-iid features.
\newblock In {\em International Conference on Machine Learning}, pages 19504--19526. PMLR, 2023.

\bibitem{huang2023generalizable}
Wenke Huang, Mang Ye, Zekun Shi, and Bo~Du.
\newblock Generalizable heterogeneous federated cross-correlation and instance similarity learning.
\newblock {\em IEEE Transactions on Pattern Analysis and Machine Intelligence}, 46(2):712--728, 2023.

\bibitem{Yurochkin2019Bayesian}
Mikhail Yurochkin, Mayank Agarwal, Soumya Ghosh, Kristjan Greenewald, Nghia Hoang, and Yasaman Khazaeni.
\newblock {B}ayesian nonparametric federated learning of neural networks.
\newblock In {\em Int. Conf. on Machine Learning ({ICML})}, volume~97, pages 7252--7261, Long Beach, California, United States, 2019. PMLR.

\bibitem{Lin2020Ensemble}
Tao Lin, Lingjing Kong, Sebastian~U Stich, and Martin Jaggi.
\newblock Ensemble distillation for robust model fusion in federated learning.
\newblock In {\em Advances in Neural Information Processing Systems ({NeurIPS})}, volume~33, pages 2351--2363, Virtual Event, 2020. Curran Associates.

\bibitem{zhang2024upload}
Jianqing Zhang, Yang Liu, Yang Hua, and Jian Cao.
\newblock An upload-efficient scheme for transferring knowledge from a server-side pre-trained generator to clients in heterogeneous federated learning.
\newblock In {\em Proceedings of the IEEE/CVF Conference on Computer Vision and Pattern Recognition}, pages 12109--12119, 2024.

\bibitem{yue2025augfl}
Sheng Yue, Zerui Qin, Yongheng Deng, Ju~Ren, Yaoxue Zhang, and Junshan Zhang.
\newblock Augfl: Augmenting federated learning with pretrained models.
\newblock {\em arXiv preprint arXiv:2503.02154}, 2025.

\bibitem{kang2023grounding}
Yan Kang, Tao Fan, Hanlin Gu, Xiaojin Zhang, Lixin Fan, and Qiang Yang.
\newblock Grounding foundation models through federated transfer learning: A general framework.
\newblock {\em arXiv preprint arXiv:2311.17431}, 2023.

\bibitem{Xing2021BiG-Fed}
Pengwei Xing, Songtao Lu, Lingfei Wu, and Han Yu.
\newblock Big-fed: Bilevel optimization enhanced graph-aided federated learning.
\newblock In {\em Int. Workshop on Federated Learning for User Privacy and Data Confidentiality}, pages 1--8, Virtual Event, 2021. PMLR.

\bibitem{jiang2022model}
Yuang Jiang, Shiqiang Wang, Victor Valls, Bong~Jun Ko, Wei-Han Lee, Kin~K Leung, and Leandros Tassiulas.
\newblock Model pruning enables efficient federated learning on edge devices.
\newblock {\em IEEE Transactions on Neural Networks and Learning Systems ({TNNLS})}, 1(1):1--22, 2022.

\bibitem{bibikar2022federated}
Sameer Bibikar, Haris Vikalo, Zhangyang Wang, and Xiaohan Chen.
\newblock Federated dynamic sparse training: Computing less, communicating less, yet learning better.
\newblock In {\em AAAI Conf. on Artificial Intelligence}, pages 6080--6088, Virtual Event, 2022. {AAAI} Press.

\bibitem{tan2022federated}
Yue Tan, Guodong Long, Jie Ma, Lu~Liu, Tianyi Zhou, and Jing Jiang.
\newblock Federated learning from pre-trained models: A contrastive learning approach.
\newblock In {\em Advances in Neural Information Processing Systems ({NeurIPS})}, pages 1--20, New Orleans, Louisiana, United States, 2022. Curran Associates.

\bibitem{lai2021oort}
Fan Lai, Xiangfeng Zhu, Harsha~V Madhyastha, and Mosharaf Chowdhury.
\newblock Oort: Efficient federated learning via guided participant selection.
\newblock In {\em {USENIX} Symposium on Operating Systems Design and Implementation ({OSDI})}, pages 19--35, 2021.

\bibitem{duan2019astraea}
Moming Duan, Duo Liu, Xianzhang Chen, Yujuan Tan, Jinting Ren, Lei Qiao, and Liang Liang.
\newblock Astraea: Self-balancing federated learning for improving classification accuracy of mobile deep learning applications.
\newblock In {\em IEEE Int. Conf. on Computer Design (ICCD)}, pages 246--254, 2019.

\bibitem{yang2021client}
Miao Yang, Hua Qian, Ximin Wang, Yong Zhou, and Hongbin Zhu.
\newblock Client selection for federated learning with label noise.
\newblock {\em IEEE Transactions on Vehicular Technology}, 71(2):2193--2197, 2021.

\bibitem{Zhang2022FedDUAP}
Hong Zhang, Ji~Liu, Juncheng Jia, Yang Zhou, and Huaiyu Dai.
\newblock {FedDUAP}: Federated learning with dynamic update and adaptive pruning using shared data on the server.
\newblock In {\em Int. Joint Conf. on Artificial Intelligence ({IJCAI})}, pages 1--7, 2022.
\newblock To appear.

\bibitem{liu2024efficient}
Ji~Liu, Juncheng Jia, Hong Zhang, Yuhui Yun, Leye Wang, Yang Zhou, Huaiyu Dai, and Dejing Dou.
\newblock Efficient federated learning using dynamic update and adaptive pruning with momentum on shared server data.
\newblock {\em ACM Trans. on Intelligent Systems and Technology}, 2024.

\bibitem{yang2022improved}
Miao Yang, Ximin Wang, Hua Qian, Yongxin Zhu, Hongbin Zhu, Mohsen Guizani, and Victor Chang.
\newblock An improved federated learning algorithm for privacy-preserving in cybertwin-driven {6G} system.
\newblock {\em IEEE Transactions on Industrial Informatics}, 18(10):6733--6742, 2022.

\bibitem{zinkevich2010parallelized}
Martin Zinkevich, Markus Weimer, Alexander~J Smola, and Lihong Li.
\newblock Parallelized stochastic gradient descent.
\newblock In {\em Advances in Neural Information Processing Systems ({NeurIPS})}, volume~23, pages 1--37, Vancouver, British Columbia, Canada, 2010. Curran Associates.

\bibitem{ghadimi2018approximation}
Saeed Ghadimi and Mengdi Wang.
\newblock Approximation methods for bilevel programming.
\newblock {\em arXiv preprint arXiv:1802.02246}, 2018.

\bibitem{hong2020two}
Mingyi Hong, Hoi-To Wai, Zhaoran Wang, and Zhuoran Yang.
\newblock A two-timescale framework for bilevel optimization: Complexity analysis and application to actor-critic.
\newblock {\em arXiv preprint arXiv:2007.05170}, 2020.

\bibitem{lin2020hrank}
Mingbao Lin, Rongrong Ji, Yan Wang, Yichen Zhang, Baochang Zhang, Yonghong Tian, and Ling Shao.
\newblock Hrank: Filter pruning using high-rank feature map.
\newblock In {\em IEEE/CVF Conf. on Computer Vision and Pattern Recognition ({CVPR})}, pages 1529--1538, Seattle, WA, United States, 2020. IEEE.

\bibitem{yan2022seizing}
Gang Yan, Hao Wang, and Jian Li.
\newblock Seizing critical learning periods in federated learning.
\newblock In {\em AAAI Conf. on Artificial Intelligence}, pages 1--8, Virtual Event, 2022. {AAAI} Press.
\newblock To appear.

\bibitem{zhang2021validating}
Zeru Zhang, Jiayin Jin, Zijie Zhang, Yang Zhou, Xin Zhao, Jiaxiang Ren, Ji~Liu, Lingfei Wu, Ruoming Jin, and Dejing Dou.
\newblock Validating the lottery ticket hypothesis with inertial manifold theory.
\newblock {\em Advances in Neural Information Processing Systems ({NeurIPS})}, 34:30196--30210, 2021.

\bibitem{yang2024fedas}
Xiyuan Yang, Wenke Huang, and Mang Ye.
\newblock Fedas: Bridging inconsistency in personalized federated learning.
\newblock In {\em Proceedings of the IEEE/CVF Conference on Computer Vision and Pattern Recognition}, pages 11986--11995, 2024.

\bibitem{krizhevsky2009learning}
Alex Krizhevsky, Geoffrey Hinton, et~al.
\newblock Learning multiple layers of features from tiny images, 2009.

\bibitem{netzer2011reading}
Yuval Netzer, Tao Wang, Adam Coates, Alessandro Bissacco, Bo~Wu, and Andrew~Y Ng.
\newblock Reading digits in natural images with unsupervised feature learning.
\newblock In {\em NIPS Workshop on Deep Learning and Unsupervised Feature Learning}, pages 1--9, Granada, Spain, 2011. Curran Associates.

\bibitem{le2015tiny}
Ya~Le and Xuan Yang.
\newblock Tiny imagenet visual recognition challenge.
\newblock {\em CS 231N}, 7(7):3, 2015.

\bibitem{lecun1989handwritten}
Yann LeCun, Bernhard Boser, John Denker, Donnie Henderson, Richard Howard, Wayne Hubbard, and Lawrence Jackel.
\newblock Handwritten digit recognition with a back-propagation network.
\newblock In {\em Advances in Neural Information Processing Systems ({NeurIPS})}, volume~2, pages 1--9, Denver, CO, United States, 1989. American Institute of Physics.

\bibitem{simonyan2015very}
Karen Simonyan and Andrew Zisserman.
\newblock Very deep convolutional networks for large-scale image recognition.
\newblock In {\em Int. Conf. on Learning Representations ({ICLR})}, pages 1--14, San Diego, CA, United State, 2015. \url{OpenReview.net}.

\bibitem{he2016deep}
Kaiming He, Xiangyu Zhang, Shaoqing Ren, and Jian Sun.
\newblock Deep residual learning for image recognition.
\newblock In {\em IEEE/CVF Conf. on Computer Vision and Pattern Recognition ({CVPR})}, pages 770--778, Paradise, Nevada, United States, 2016. IEEE.

\bibitem{liu2024fisher}
Ji~Liu, Jiaxiang Ren, Ruoming Jin, Zijie Zhang, Yang Zhou, Patrick Valduriez, and Dejing Dou.
\newblock Fisher information-based efficient curriculum federated learning with large language models.
\newblock In {\em Empirical Methods in Natural Language Processing ({EMNLP})}, pages 1--27, 2024.

\bibitem{che2023federated}
Tianshi Che, Ji~Liu, Yang Zhou, Jiaxiang Ren, Jiwen Zhou, Victor~S Sheng, Huaiyu Dai, and Dejing Dou.
\newblock Federated learning of large language models with parameter-efficient prompt tuning and adaptive optimization.
\newblock In {\em Empirical Methods in Natural Language Processing ({EMNLP})}, pages 1--18, 2023.

\bibitem{dekel2012optimal}
Ofer Dekel, Ran Gilad-Bachrach, Ohad Shamir, and Lin Xiao.
\newblock Optimal distributed online prediction using mini-batches.
\newblock {\em Journal of Machine Learning Research}, 13(1):165–202, 2012.

\bibitem{benesty2008importance}
Jacob Benesty, Jingdong Chen, and Yiteng Huang.
\newblock On the importance of the pearson correlation coefficient in noise reduction.
\newblock {\em IEEE Transactions on Audio, Speech, and Language Processing}, 16(4):757--765, 2008.

\end{thebibliography}
%

 




\vfill

\end{document}